\newcommand{\ra}{\rightarrow}
\newcommand{\la}{\leftarrow}
\newcommand{\ie}{\unskip, i.\,e.,\xspace}
\newcommand{\eg}{\unskip, e.\,g.,\xspace}
\newcommand{\sut}{\text{s.\,t.\,}}
\newcommand{\nrm}[1]{\left\lVert#1\right\rVert}
\newcommand{\diag}[1]{\ensuremath{\text{diag}}\left\{#1\right\}}
\newcommand{\abs}[1]{\left\lvert#1\right\rvert}
\newcommand{\E}[1]{\mathbb E\left[#1\right]}
\newcommand{\N}{\ensuremath{\mathbb{N}}}
\newcommand{\R}{\ensuremath{\mathbb{R}}}
\newcommand{\X}{\ensuremath{\mathbb{X}}}
\newcommand{\U}{\ensuremath{\mathbb{U}}}
\newcommand*\diff{\mathop{}\!\mathrm{d}}
\newcommand{\eps}{\ensuremath{\varepsilon}}
\newcommand{\spc}{\ensuremath{\,\,}}
\newcommand{\ball}{\ensuremath{\mathcal B}}
\DeclareMathOperator*{\argmin}{arg\,min}
\newcommand{\red}[1]{\textcolor{red}{#1}}
\definecolor{dgreen}{rgb}{0.0, 0.5, 0.0}
\newcommand{\green}[1]{\textcolor{dgreen}{#1}}
\newtheorem{dfn}{Definition}
\newtheorem{asm}{Assumption}
\newtheorem{lem}{Lemma}
\newtheorem{thm}{Theorem}
\newtheorem{rem}{Remark}
\newcommand{\W}{\ensuremath{\mathbb{W}}}
\newcommand{\B}{\ensuremath{\mathcal{B}}}
\newcommand{\K}{\ensuremath{\mathcal{K}}\xspace}				% Capital kappa
\newcommand{\Kinf}{\ensuremath{\mathcal{K}_{\infty}}\xspace}		% Kappa-infinity
\newcommand{\alphalow}{\ensuremath{\alpha_{\text{low}}}}
\newcommand{\alphaup}{\ensuremath{\alpha_{\text{up}}}}
\newcommand{\hatalphalow}{\ensuremath{\hat\alpha_{\text{low}}}}
\newcommand{\hatalphaup}{\ensuremath{\hat\alpha_{\text{up}}}}
\newcommand{\inv}{\ensuremath{^{-1}}}
\newcommand{\subalign}[1]{%
	\vcenter{%
		\Let@ \restore@math@cr \default@tag
		\baselineskip\fontdimen10 \scriptfont\tw@
		\advance\baselineskip\fontdimen12 \scriptfont\tw@
		\lineskip\thr@@\fontdimen8 \scriptfont\thr@@
		\lineskiplimit\lineskip
		\ialign{\hfil$\m@th\scriptstyle##$&$\m@th\scriptstyle{}##$\crcr
			#1\crcr
		}%
	}
}
\def\oversortoftilde#1{\mathop{\vbox{\m@th\ialign{##\crcr\noalign{\kern3\p@}%
				\sortoftildefill\crcr\noalign{\kern3\p@\nointerlineskip}%
				$\hfil\displaystyle{#1}\hfil$\crcr}}}\limits}
\def\sortoftildefill{$\m@th \setbox\z@\hbox{$\braceld$}%
	\braceld\leaders\vrule \@height\ht\z@ \@depth\z@\hfill\braceru$}
\newcommand{\lip}[1]{\text{Lip}_{#1}}
\newcommand{\wo}{{w_{\text{prev}}}}
\newcommand{\xn}{{x_{\text{next}}}}
\newcommand{\sublyapunov}{\hat{\alpha}_{\text{low}}^{-1}(\alpha_\text{up}(\nrm{X_0}))}
\newcommand{\Ja}{\ensuremath{\mathbb{J}_a}}
\newcommand{\Jc}{\ensuremath{\mathbb{J}_c}}
\def\BibTeX{{\rm B\kern-.05em{\sc i\kern-.025em b}\kern-.08em
    T\kern-.1667em\lower.7ex\hbox{E}\kern-.125emX}}
\begin{document}

\title{A framework for online, stabilizing reinforcement learning}
\author{Grigory Yaremenko, Georgy Malaniya, Pavel Osinenko
\thanks{The authors are with the Skolkovo Institute of Science and Technology, emails: \texttt{\{Grigory.Yaremenko, p.osinenko\}@skoltech.ru}.}
\thanks{The authors are thankful to Ksenia Makarova for supporting the experimental studies of this work.}
\thanks{Pavel Osinenko is the corresponding author.}
}

\maketitle

%%%%%%%%%%%%%%%%%%%%%%%%%%%%%%%%%%%%%%%%%%%%%%%%%%%%%%%%%%%%%%%%%%%%%%%%%%%%%%%%%%%%%%%%%%%%
%%%%%%%%%%%%%%%%%%%%%%%%%%%%%%%%%%%%%%%%%%%%%%%%%%%%%%%%%%%%%%%%%%%%%%%%%%%%%%%%%%%%%%%%%%%%

\begin{abstract}
Online reinforcement learning is concerned with training an agent on-the-fly via dynamic interaction with the environment.
Here, due to the specifics of the application, it is not generally possible to perform long pre-training, as it is commonly done in off-line, model-free approaches, which are akin to dynamic programming.
Such applications may be found more frequently in industry, rather than in pure digital fields, such as cloud services, video games, database management, etc., where reinforcement learning has been demonstrating success.
Online reinforcement learning, in contrast, is more akin to classical control, which utilizes some model knowledge about the environment.
Stability of the closed-loop (agent plus the environment) is a major challenge for such online approaches.
In this paper, we tackle this problem by a special fusion of online reinforcement learning with elements of classical control, namely, based on the Lyapunov theory of stability.
The idea is to start the agent at once, without pre-training, and learn approximately optimal policy under specially designed constraints, which guarantee stability.
The resulting approach was tested in an extensive experimental study with a mobile robot.
A nominal parking controller was used as a baseline.
It was observed that the suggested agent could always successfully park the robot, while significantly improving the cost.
While many approaches may be exploited for mobile robot control, we suggest that the experiments showed the promising potential of online reinforcement learning agents based on Lyapunov-like constraints.
The presented methodology may be utilized in safety-critical, industrial applications where stability is necessary.
\end{abstract}

\begin{IEEEkeywords}
	Nonlinear systems, reinforcement learning, stability
\end{IEEEkeywords}

%%%%%%%%%%%%%%%%%%%%%%%%%%%%%%%%%%%%%%%%%%%%%%%%%%%%%%%%%%%%%%%%%%%%%%%%%%%%%%%%%%%%%%%%%%%%
%%%%%%%%%%%%%%%%%%%%%%%%%%%%%%%%%%%%%%%%%%%%%%%%%%%%%%%%%%%%%%%%%%%%%%%%%%%%%%%%%%%%%%%%%%%%
\section{Introduction}
\label{sec:intro}

%%%%%%%%%%%%%%%%%%%%%%%%%%%%%%%%%%%%%%%%%%%%%%%%%%%%%%%%%%%%%%%%%%%%%%%%%%%%%%%%%%%%%%%%%%%%

%%%%%%%%%%%%%%%%%%%%%%%%%%%%%%%%%%%%%%%%%%%%%%%%%%%%%%%%%%%%%%%%%%%%%%%
%%%%%%%%%%%%%%%%%%%%%%%%%%%%%%%%%%%%%%%%%%%%%%%%%%%%%%%%%%%%%%%%%%%%%%%
\subsection{Summary of stabilizing reinforcement learning}

Reinforcement learning is an optimal control method that uses adaptation imitating living beings in their natural environments \cite{Lewis2013, Sutton2018, Bertsekas2019}.
%Its applications range from robotics \cite{Kumar2016,Borno2013,Tassa2012,Surmann2020,Akkaya2019} to games such as Go, chess, shogi (also known as Japanese chess) \cite{Silver2016,Silver2018}, even complex ones such as StarCraft II \cite{Vinyals2019}.
Its applications range from robotics \cite{Kumar2016,Surmann2020} to games such as Go, chess, shogi (also known as Japanese chess) \cite{Silver2018}, even complex ones such as StarCraft II \cite{Vinyals2019}.

Reinforcement learning can be model-based or model-free, online or offline etc.
Online here means being able to learn during exploitation, as opposed running training procedures beforehand.
Offline, on contrary, refers to learning by prior optimization of an agent that is only deployed once the learning is finalized.

Contemporary reinforcement learning methods are usually understood in the context of approximating (learning) the cost-to-go function $J$ (or Q-function, or advantage function $A$ or something else related to the cost-to-go) via (deep) neural networks.
The core problem with such an approach is that one cannot know how good the chosen neural network topology is capable of approximating the cost-to-go function.

\textit{Although it is known that the optimizer (the optimal policy) has certain desirable properties \eg it keeps the system stable, an optimizer resulting from an approximate cost-to-go function has in general no guarantees for the closed loop, first and foremost in terms of stability.}

Measures were and are taken to provide the said guarantees.
These roughly go in the following three directions:
\begin{enumerate}
\item introduction of a filter to discard unsuitable actions.
Such a filter may be human-based \cite{Saunders2017} or designed on the grounds of formal verification \cite{Alshiekh2018};
\item merging of reinforcement learning with (stabilizing) model-predictive control \cite{Beckenbach2019,Beckenbach2018,Beckenbach2020,Zanon2019,Zanon2020,Berkenkamp2019};
\item Lyapunov-based reinforcement learning \cite{Perkins2001,Chow2018,Osinenko2020a}.
%\item merging of reinforcement learning with (stabilizing) model-predictive control \cite{Beckenbach2019,Beckenbach2018,Beckenbach2020,Zanon2019,Zanon2020,Berkenkamp2019};
%\item Lyapunov-based reinforcement learning \cite{Perkins2001,Perkins2002,Chow2018,Choi2020,Hewing2020,Osinenko2020a}.

\end{enumerate}
A detailed survey on these methods can be found in \cite{osinenko2022reinforcement}.

There also exists an approach that intends to utilize the cost-to-go function itself as an effective Lyapunov function for the closed loop.
This approach seems only viable if the learned policy is close enough to the optimal one, although some claim an actor-critic controller to be stabilizing under usage of model-free tools such as robustifying terms from the adaptive control field \cite{Vamvoudakis2015}.
Unfortunately, the latter paper contains erroneous results -- see the detailed analysis and a counter-example in \cite{Osinenko2021i}.

We focus on the Lyapunov-based stabilizing reinforcement learning as per the third direction as listed above.
Some approaches of this category, such as \cite{Perkins2001,Perkins2002,Hewing2020}, can be regarded as offline.
Our interest here lies in online control methods though.

%% Check the citations below !!!!!!!
In previous works by the author \cite{Goehrt2019b,Goehrt2019,Goehrt2020, Osinenko2020a}, a framework was developed for stabilizing, online, Lyapunov-based reinforcement learning. The results in \cite{Osinenko2020a} were enabled by the techniques of sample-and-hold stabilization analyses \cite{Clarke1997,Osinenko2018,Schmidt2021}, which were recently extended to the case of stochastic systems \cite{Osinenko2021d, Osinenko2021e}.

%%%%%%%%%%%%%%%%%%%%%%%%%%%%%%%%%%%%%%%%%%%%%%%%%%%%%%%%%%%%%%%%%%%%%%%%%%%%%%%%%%%%%%%%%%%%
\subsection{Stabilizing reinforcement learning via Lyapunov techniques}
\label{sub:stab-RL-LF}

Lyapunov theory and its stochastic generalization due to R. Khasminskii \cite{Khasminskii2011} are the recognized standard when it comes to stability analyses.
Already in the beginning of the 2000s, Lyapunov theory was used to design stabilizing reinforcement learning by Perkins and Barto \cite{Perkins2001, Perkins2002}.
The latter work meant to  design or restrict the action choices of the agent so that on of the Lyapunov theorems applies.
A variety such approaches were developed since.
For instance, \cite{Chow2018} suggested a stability-preserving Bellman operator ensure the current cost-to-go function estimate satisfy a Lyapunov condition.
The approach required iterations over a region in the state space and also that the baseline (stabilizing) policy be sufficiently close to the optimal one.
Berkenkamp et al. \cite{Berkenkamp2017} used state space splitting to verify a Lyapunov condition in each cell.
The approach needed certain confidence intervals on the statistical model of the environment.
Lyapunov techniques were employed in \cite{Vamvoudakis2015} along with elements of adaptive control.
This approach tacitly assumed the input-coupling function be bounded away from zero to effectively employ a kind of high-gain compensation of the sign-indefinite terms in the final Lyapunov condition (see Section IV in the said work, in particular, formula 45).
Such an assumption is fairly restrictive and the high-gain compensation might neutralize the effects of learning, while harming minimal interference.

Also, techniques of adaptive control were used in stabilizing approximate optimal control for environments with partially unknown models as per \cite{Kamalapurkar2016}.
In \cite{Vamvoudakis2015} and \cite{Kamalapurkar2016}, the cost-to-go function was effectively used as the Lyapunov function candidate for the state.
The drift was assumed linear in unknown parameters and the input-coupling function was assumed uniformly bounded.
An overview of such techniques, that employ adaptive control in reinforcement learning, can be found in \cite{Kamalapurkar2018}.

%%%%%%%%%%%%%%%%%%%%%%%%%%%%%%%%%%%%%%%%%%%%%%%%%%%%%%%%%%%%%%%%%%%%%%%%%%%%%%%%%%%%%%%%%%%%
%%%%%%%%%%%%%%%%%%%%%%%%%%%%%%%%%%%%%%%%%%%%%%%%%%%%%%%%%%%%%%%%%%%%%%%%%%%%%%%%%%%%%%%%%%%%
\section{Summary of the state of the art and contribution}
\label{sec:contribution}

This paper adds to the state of the art of Lyapunov techniques in stabilizing reinforcement learning.
The proposed method introduces specially designed Lyapunov-like constraints on the agent learning so that the system be stabilized in a suitable sense.
The overall framework implying the imposition of such constraints is titled \textbf{C}ritic \textbf{A}s a \textbf{L}yapunov \textbf{F}unction (\textbf{CALF}).
Indeed, it is shown that the critic becomes effectively a (strictly speaking, time-varying) Lyapunov function for the closed loop with a suitable decay condition.
%Essentially, The latter constraints ensure a Lyapunov-like behavior of the critic that in turn makes the origin of the agent-environment closed loop stable.

We consider environments influenced by bounded stochastic disturbance in a sample-and-hold setting \ie we assume the underlying dynamics be continuous, whereas the actions are taken at discrete moments in time.
The inter-sample behavior of the environment is addressed explicitly in the stability analysis.
The reason to assume bounded noise is because, otherwise, no guarantees can be made on the inter-sample behavior of the trajectories and this is independent of the agent design.
Yet, bounded noise is physically meaningful and we give an overview of some respective stochastic models in the appendix.

Unlike some of the existing Lyapunov-based approaches, the currently presented one is purely online.
It should be noted that the derived stability guarantees do not require persistence of excitation. 
The proposed approach applies to environments with general nonlinear dynamics, not necessarily control-affine, nor does it need the input-coupling to be bounded or bounded away from zero.

The proofs of all the theorems, as well as the extra graphical material on the simulation studies, are to be found in the appendix.

\subsection{Notation}
\label{sub:notation}

The set of real numbers with zero is denoted $\R_+$.
A Lipschitz constant of a locally Liphscitz continuous function $\varphi$ is denoted as $\lip{\varphi}$, whereas the concrete compact set on which this constant is taken is defined in the each specific context.
Spaces of class kappa functions are denoted $\K, \K_\infty$.
A diagonal matrix formed from a vector $x$ is denoted $\diag{x}$.

%%%%%%%%%%%%%%%%%%%%%%%%%%%%%%%%%%%%%%%%%%%%%%%%%%%%%%%%%%%%%%%%%%%%%%%%%%%%%%%%%%%%%%%%%%%%
%%%%%%%%%%%%%%%%%%%%%%%%%%%%%%%%%%%%%%%%%%%%%%%%%%%%%%%%%%%%%%%%%%%%%%%%%%%%%%%%%%%%%%%%%%%%
\section{Background}
\label{sec:background}

Consider an agent-environment loop that assumes a state $X_t \in \mathbb{X} \subset \R^n$ at a time $t \in \mathcal{T} = \R_+$, whose dynamics are modeled by the following transition law:
\begin{equation}
\begin{aligned}
	\label{eqn:sys}
	&\diff X_t = f(X_t, U_t)\diff t + \sigma(X_t, U_t)\diff Z_t,\\
	&\mathbb{P}\left[X_0 = \text{const}\right] = 1
\end{aligned}
\end{equation}
where an unknown noise $Z_t \in \R^d$ is modelled by an arbitrary Lipschitz-continuous random process with a fixed Lipschitz constant $\lip{Z}$, and $\sigma : \mathbb{X} \times \mathbb{U} \ra \R^{n\times d}$ is a bounded mixing matrix \ie  $\nrm{\sigma(x, u)}_{\text{op}} \leq \sigma_{\max}, \sigma_{\max} > 0$.
Here and after, when claims are made about sections of random processes, the respective conditions are meant to hold almost surely.
We assume that $f$ and $\sigma$ are locally Lipschitz continuous with respect to $X_t$.
%This gives us sufficient grounds to prove the following
%
%\begin{thm}
%    The equation (\ref{eqn:sys}) has a solution for arbitrary $W_t$.
%\end{thm}
Particular models for such a bounded noise as in the above description are given in Section \ref{sub:noise-models} of the appendix.

Let a policy $\rho : \mathbb{X} \times \mathcal{T} \ra \mathbb{U}$ determine the agent's action $U_t \in \mathbb{U}$ at time $t$ in the following way:
\begin{equation}
	\label{eqn:sh}
	U_t = \rho(X_{t \ - \ t \ \text{mod} \ \delta}),
\end{equation}
where $\delta$ is the digital controller sampling time of the present (such a setting is also known as ``sample-and-hold'' or S\&H).
Under such a digital control, at best only practical stability in the sense of the following definition can be achieved.

\begin{dfn}
	\label{def:pract-stab}
	A policy $\rho[X_0](\cdot)$ is said to practically semi-globally stabilize \eqref{eqn:sys} from $X_0$ until $s^{\ast}$ if, given $s^{\ast}<r<\infty$, there exists a $\bar \delta > 0$ \sut under \eqref{eqn:sh} a trajectory $x(t)$ with a sampling time $\delta \le \bar \delta$, starting at $X_0$ is bounded, enters $\mathcal{B}_r$ after a time $t'$, which depends uniformly on $\nrm{X_0},r$, and remains there for all $t \geq t'$.
\end{dfn}
In the following, for brevity, the wording ``semi-globally'' is omitted. 

%There can be a vast multitude of policies that succeed at practically stabilizing the environment, but when it comes to choosing the most desirable one, a great deal of considerations can come into play \eg energy conservation, wear reduction, avoiding unfavourable zones and so on.
%
%That's why an algorithm that synthesizes such a policy should ideally also account for costs associated with the trajectories this policy produces. Thus we introduce a cost function

We consider infinite horizon optimal control in the sense of cost-to-go rather than accumulated reward, although the two are just different and equipollent formulations. 
Let the running cost be $\rho : \mathbb{X} \times \mathbb{U} \ra \R$.
The aim of reinforcement learning is to find a policy that optimizes a cost-to-go $J_\rho : \mathbb{X} \ra \R$, where
$$
J_\rho(X_0) := \E{\int \limits_0^{\infty} r(X_t, U_t) \diff t}, \ X_0 = const \text{ almost surely},
$$
and $U_t$ satisfies \eqref{eqn:sh}.
The minimum of the cost-to-go is the \textit{optimal cost-to-go function} $J$.
 
Note, that despite the fact that $t$ appears in \eqref{eqn:sh}, $J_{\rho}$ need not depend on time, since $\rho$ is interpreted as a stationary policy in discrete time.

\begin{rem}
It must be noted that the proposed approach is also applicable to the case, when one is dealing with maximization of total reward rather than minimization of total cost. In the former case however running costs and cost-to-go function $J$ are replaced with running rewards and value function $V$ respectively, while the minima of total cost are obviously replaced with maxima of total reward.
\end{rem}

The goal of the present paper is to determine an algorithm capable of constructing a practically stabilizing policy on the fly, given the agent's transition law, while also addressing the above stated aim via reinforcement learning methods.
In other words, the proposed algorithm is intended to generate a quasi-optimal practically stabilizing control policy $\rho[X_0]$ for an arbitrary initial state $X_0$ given the state transition function $f(\cdot, \cdot)$.
Notably, the generation of such policies should not require any pre-training \ie the algorithm is supposed to immediately provide a viable policy given only $f(\cdot, \cdot)$ and $X_0$.

In the next section, we give the central assumptions in succeeding at this problem statement.

%%%%%%%%%%%%%%%%%%%%%%%%%%%%%%%%%%%%%%%%%%%%%%%%%%%%%%%%%%%%%%%%%%%%%%%%%%%%%%%%%%%%%%%%%%%%
\subsection{Assumptions}
\label{sub:assumptions}

The starting point in many reinforcement learning analyses is the requirement that \eqref{eqn:sys} be stabilizable \cite{AlTamimi2008,Heydari2014,Liu2013,Jiang2015}.
%(in non-S\&H mode \ie with $U_t = \rho(X_t, t)$)
In this work, we make the following assumptions that are both implied by stabilizability \cite{Hafstein2005}.

\begin{asm}
	\label{asm:CLF}
	There exists a locally Lipschitz continuous, positive-definite function $L:\R^n \ra \R_{+}$, a diverging continuous positive definite function $\nu:\R^n \ra \R_{+}$ and $\alphalow, \alphaup \in \Kinf$ \sut for any compact $\bar{\mathbb{X}} \subset \mathbb{X}$, there exists a compact set $\bar{\mathbb{U}} \subset \mathbb{U}$ and it holds that, for any $x \in \bar{\mathbb{X}}$, 
	\begin{enumerate}
		\item[i)] $L$ has a decay rate satisfying
		\begin{align} \label{eqn:CLF-decay}
		\inf_{u \in \bar{\mathbb{U}}} \; \mathcal{L}_{f(x,u)} L(x) \leq - \nu_{0} (x),
		\end{align}
		\item[ii)] $\alphalow(\nrm{x}) \leq L(x) \leq \alphaup(\nrm{x})$.
		%		\item[iii)] for all $y \in \X$, 
		%		\begin{align} \label{eqn:CLF-Lipschitz}
		%			\|L(x) - L(y) \| \leq \lip{L} \|x - y\|.
		%		\end{align}
	\end{enumerate}
\end{asm} 
In the above, $\mathcal{L}_{f(x,u)}L(x)$ is a suitable generalized directional differential operator of $L(x)$ in the direction of $f(x,u)$ \eg directional subderivative \cite{Clarke2008}.
\begin{asm}
\label{asm:sample-wise-decay}

There exists a stabilizing sample-and-hold policy $\eta : \X \rightarrow \U$ associated with the latter Lyapunov function $L$ \ie
\begin{multline}
\forall \delta \in (0, \bar \delta] \ \ : \\
U_{t} := \eta(X_{t \text{ mod } \delta}) \implies \\  L(X_{(k + 1)\delta}) - L(X_{k\delta}) \leq -\delta \nu(X_{k\delta}),
\end{multline}
where $\nu(x) > 0$ whenever $\bar S_{\sigma_{\max}, \bar \delta} > \nrm{x} > \bar s_{\sigma_{\max}, \bar \delta}$.
\end{asm} 

A recent stochastic stability analysis \cite{Osinenko2021e} has shown that as long as the noiseless system
\begin{multline}
\diff X^{\textit{noiseless}}_{t} = f(X^{\textit{noiseless}}_{t}, U_{t})\diff t\\
\end{multline}
is stabilizable, assumptions 1-2 hold. 
Furthermore it is true that 
%\begin{multline}
%\bar s_{\sigma_{max}, \bar \delta} \rightarrow 0, \text{ as } {{\sigma_{\max} \rightarrow 0} \atop {\bar \delta \rightarrow 0}},\\
%\bar S_{\sigma_{max}, \bar \delta} \rightarrow +\infty, \text{ as } {{\sigma_{\max} \rightarrow 0} \atop {\bar \delta \rightarrow 0}}.\\
%\end{multline}

\begin{multline}
\bar s_{\sigma_{max}, \bar \delta} \rightarrow 0, \text{ as } \sigma_{\max} \rightarrow 0 \text{ and } \bar \delta \rightarrow 0,\\
\bar S_{\sigma_{max}, \bar \delta} \rightarrow +\infty, \text{ as } \sigma_{\max} \rightarrow 0 \text{ and } \bar \delta \rightarrow 0.\\
\end{multline}

% (until an $s^{\ast}_L$ that depends on $w, L, W_{\max}, \sigma_{\max}$) \sut
%\begin{equation}
%	\label{eqn:decay-nom-policy}
%	x=1
%\end{equation}

\begin{rem}
	\label{rem:LF}
	Assumption \ref{asm:CLF} is weaker than in some reinforcement learning schemes with guarantees, such as \cite{Jiang2015} where $L$ needs actually to satisfy a Hamilton-Bellman-Jacobi inequality, but is still somewhat strong (however, see \cite{Berkenkamp2017}).
	This is, as discussed above, the price to pay for ensuring practical stabilizability property of the policy resulting from learning.
%	However, already controllability, as required in convergence analyses of reinforcement learning (see \eg \cite{AlTamimi2008,Heydari2014,Liu2013,Jiang2015}), implies stabilizability \cite{Sastry2013,Clarke1997}.
	However, already controllability, as required in convergence analyses of reinforcement learning (see \eg \cite{Heydari2014,Jiang2015}), implies stabilizability \cite{Sastry2013,Clarke1997}.
	If the closed-loop is stable, there is \eg a constructive way to find a Lyapunov function (see \cite{Hafstein2004}).
%	Also, there are numerical techniques to build control Lyapunov functions \cite{Baier2014,Giesl2015a,Baier2012}.
	Also, there are numerical techniques to build control Lyapunov functions \cite{Baier2014}.	
	Furthermore, they can be constructed from physical considerations \eg kinetic and potential energy \cite{Berkenkamp2017}.
	Then, there are systematic techniques such as augmented Lyapunov functions in adaptive control; backstepping; stabilizing model-predictive control whose finite-horizon cost function can be shown to be a Lyapunov function etc.
\end{rem}

\begin{rem}
Also note that the existence of such $\alphalow(\cdot)$ and $\alphaup(\cdot)$  are equivalent to radial unboundedness and positive-definiteness respectively \cite{Khalil1996}. Furthermore, a constructive proof of the latter properties would have to involve a construction of such functions. Thus if a $L(\cdot)$ was constructively proven to be a Lyapunov function, then the respective $\alphalow(\cdot)$ and $\alphaup(\cdot)$ can be extracted from the latter proof.
\end{rem}

Let the critic $\hat{J} : \mathbb{X} \times \W \ra \R$ be an approximator \eg a deep neural network, with weights $w \in \W$, where $\W$ is compact and $0 \notin \W$.
It is up to the user of the proposed approach to choose their critic structure.
A flexible critic, capable of closely approximating the cost-to-go function will likely yield a better performance.
However, the stabilization itself will occur regardless of such nuances as long as the critic is chosen so as to satisfy the following assumption:
The actor will be considered also in a parametric form $\rho^\theta : \X \times \Theta \ra \U$, where $\Theta$ is a compact actor weight set.

\begin{asm}
	\label{asm:struct-equiv}
	For each $\zeta > 0$ there exists $w^{\#}_{\zeta} \in \W$ \sut $\zeta L(x) = \hat{J}^{w^{\#}_{\zeta}}(x)$, for all $x \in \mathbb{X}$.
\end{asm} 
Note that if $\zeta > 0$ then $\zeta L(x)$ is also a control-Lyapunov function that corresponds to the same stabilizing policy as $L(x)$.

\begin{rem}
	\label{rem:struct-equiv-simple}
	The simplest way to satisfy Assumption \ref{asm:struct-equiv} is to choose $\hat{J}^{w}(x) = w_0 \varphi(x, w_{[2:N_c + 1]}) + w_1 L(x)$, where $\varphi$ is a (deep) neural network with some $N_c$  weights $w_{[2:N_c + 1]}$, and $w_0, w_1 \in [0, \infty)$.
\end{rem}

\begin{rem}
	\label{rem:struct-equiv}
	Assumption \ref{asm:struct-equiv} means that the critic structure be rich enough so as to capture the structure of the Lyapunov function as per Assumption \ref{asm:CLF}.
	We use it in such an exact format only for brevity.
	A relaxed form with some discrepancy between $\hat{J}^{w^{\#}_{\zeta}}(x)$ and $\zeta L(x)$ would also do the job, one would only to account for it in the constraints \eqref{eqn:actor-critic-stab-1}--\eqref{eqn:actor-critic-stab-4}.
	A weaker, local condition of the kind $\forall \text{ compact } \mathbb X' \spc \exists w^{\#}_{\zeta} \spc \forall x \in \mathbb{X'} \spc \hat{J}^{w^{\#}_{\zeta}}(x) = \zeta L(x)$ would also fit, whereas one would just need to determine such an $\mathbb X'$ that the current and predicted state lie in.
	Again, we choose the simplest form for the ease of exposition.
\end{rem}

\begin{asm}
The critic $\hat J^w(\cdot)$ is locally Lipschitz continuous for every $w \in \W$. Thus the notation $\lip{\hat J}$ implies $\forall w \in \W \ \lvert\hat J^{w}(x) - \hat J^{w}(x') \rvert \leq \lip{\hat J}\nrm{x - x'}$. Though, since $\hat J(\cdot)$ is \textbf{locally} Lipschitz continuous, its Lipschitz constant $\lip{\hat J}$ is ambiguous until a bounded domain is specified.
\end{asm}
%%%%%%%%%%%%%%%%%%%%%%%%%%%%%%%%%%%%%%%%%%%%%%%%%%%%%%%%%%%%%%%%%%%%%%%%%%%%%%%%%%%%%%%%%%%%
%%%%%%%%%%%%%%%%%%%%%%%%%%%%%%%%%%%%%%%%%%%%%%%%%%%%%%%%%%%%%%%%%%%%%%%%%%%%%%%%%%%%%%%%%%%%
\section{Preliminaries}
\label{sec:preliminaries}

Let $\Lambda_{U_t}(X_t)$ be the Euler estimate of $X_{t + \delta}$ for the noiseless system:
$$
\Lambda_{U_t}(X_t) = X_t + \delta f(X_t, U_t).
$$

\begin{rem}
	\label{rem:prediction scheme}
	The Euler prediction scheme is chosen for the ease of exposition of the forthcoming analyses, but other choices, including Runge-Kutta methods would work.
\end{rem}

Online reinforcement learning implies that at each time step $k$ an update is being carried out for both the control input $u$ and the critic weights $w$:
\begin{equation}
u_k, w_k \leftarrow F(x_k, w_{k-1}, \dots)
\end{equation}
The present paper discusses how online stability can be ensured by enforcing certain stabilzing constraints when performing the latter update. Let the stabilizing constraint function $G^{\bar \nu}_\eps(\cdot, \cdot, \cdot, \cdot)$ be defined as follows:
\begin{equation}
G^{\bar \nu}_\eps(w, \wo, \xn, x) := \begin{bmatrix}
\hat{J}^w(x) - \hat{J}^\wo(x) - \delta \eps\\
L(\xn) - \hat{J}^w(\xn) \\
\hat{J}^w(\xn) - \hat{J}^w(x) + \delta {\bar \nu} \\
\hatalphalow(\nrm{x}) - \hat{J}^w(x)\\
\hat{J}^w(x) - \hatalphaup(\nrm{x})\\
\end{bmatrix}
\end{equation}
where $\hat{\alpha}_{1, 2} \in \mathcal{K}_\infty$ are chosen $\sut \forall x \spc \hatalphalow(\nrm{x}) \le \alphalow(\nrm{x}), \alphaup(\nrm{x}) \le \hatalphaup(\nrm{x})$.

Note that ensuring $G^{\bar \nu}_\eps(w, \wo, \xn, x) \leq 0$ is equivalent to imposing the following four scalar constraints: 
\begin{subequations}
	\label{eqn:actor-critic-stab}
	\begin{align}
	&\hat{J}^w(x) \leq \hat{J}^\wo(x) + \delta \eps,  \label{eqn:actor-critic-stab-1} \tag{C1}\\
	& L(\xn) \leq \hat{J}^w(\xn), \label{eqn:actor-critic-stab-2} \tag{C2}\\
	& \hat{J}^w(\xn) - \hat{J}^w(x) \leq - \delta {\bar \nu}, \label{eqn:actor-critic-stab-3} \tag{C3}\\
	& \hatalphalow(\nrm{x}) \leq \hat{J}^w(x) \leq \hatalphaup(\nrm{x}), \label{eqn:actor-critic-stab-4} \tag{C4}
	\end{align}
\end{subequations}

\section{Algorithm}
\label{sec:algorithm}

It will be shown that enforcing either $G^{\bar \nu}_\eps(w_k, w_{k-1}, x_k, x_{k - 1}) \leq 0$ or $G^{\bar \nu}_\eps(w_k, w_{k-1}, \Lambda_u(x_k), x_k) \leq 0$ is sufficient to guarantee a certain kind of stability. At the same time the latter constraints do not prevent the critic from converging to the optimal cost-to-go function.\\
Below are examples of particular algorithms that involve these constraints.

%\begin{itemize}

%\item

A SARSA-like approach can be used if the model of the noise is not priorly known is shown in Algorithm \ref{alg:CALF_SARSA}.

\begin{algorithm}[h]
	\SetAlgoLined
	\KwData{$\eps > 0$, ${\bar \nu} > 0$, $\zeta > 0$}
	\KwResult{$U_0, U_{\delta}, U_{2\delta}, \dots$}
	$x_0 \la X_0$;\\
	$w_0 \la w^\#_{\zeta}$;\\
	\For{$k = 0 \ \dots \ \infty$}
	{
	${\left({\theta_{k}} \atop {w_{-1}}\right)} \gets \argmin\limits_{\text{\sut} G^{\bar \nu}_\eps(w, w_k, \Lambda_{(\rho^{\theta}(x_k))}(x_{k}), x_{k}) \le 0 } \Jc(\theta, w) $ \\
	where
	{\setlength{\mathindent}{0cm}
	\begin{align*}
		\Ja(\theta, w) = & r(x_k, \rho^{\theta}(x_{k})) + \hat{J}^{w_k}(\Lambda_{\rho^{\theta}(x_{k})}(x_k))	
	\end{align*} 
	} \\		
	$U_{k\delta} \la \rho_{\theta_{k}}(x_{k})$ \\
	\textbf{perform} $U_{k\delta}$ \\
	$x_{k + 1} \gets$ \textbf{observe} $X_{(k + 1)\delta}$ \\				
	$w_{k + 1} \gets  \argmin\limits_{ \text{\sut} G^{\bar \nu}_\eps(w, w_k, x_{k + 1}, x_{k}) \le 0 } \Jc(w)$ \\
	where
	{\setlength{\mathindent}{0cm}
	\begin{align*}
		\Jc(w) = & \left(\hat{J}^{w}(x_k) - r(x_k, U_{k\delta}) - \hat{J}^{w_k}(x_{k + 1})\right)^2
	\end{align*} 
	}
	}		
	\caption{CALF as SARSA}
	\label{alg:CALF_SARSA}
\end{algorithm}

%\item 

A sort of an actor-critic can be employed provided that one has the means to estimate the expected value (for instance by computing the mean of an artificial sample).
It is shown in Algorithm \ref{alg:CALF_AC}.

\begin{algorithm}[h]
	\SetAlgoLined
	\KwData{$\eps > 0$, ${\bar \nu} > 0$, $\zeta > 0$}
	\KwResult{$U_0, U_{\delta}, U_{2\delta}, \dots$}
	$w_0 \la w^\#_{\zeta}$;\\
	\For{$k = 0 \ \dots \ \infty$}{	
		$x_{k} \gets$ \textbf{observe} $X_{k\delta}$\\
		$ \left({{\theta_{k}} \atop {w_{-1}}}\right) \gets \argmin\limits_{\text{\sut} G^{\bar \nu}_\eps(w, w_k, \Lambda_{(\rho^{\theta}(x_k))}(x_{k}), x_{k}) \le 0 } \Ja(\theta, w)$ \\
		where
		{\setlength{\mathindent}{0cm}
		\begin{align*}
			\Ja(\theta, w) & = r(x_k, \rho^{\theta}(x_{k})) + \\
			& \E{\hat{J}^{w_{k}}(X_{(k + 1)\delta}) | X_{k\delta} = x_k, U_{k\delta}=\rho^{\theta}(x_{k})} \\
		\end{align*} 
		}			
		$w_{k + 1} \gets \argmin\limits_{\text{\sut} G^{\bar \nu}_\eps(w, w_k, \Lambda_{(\rho^{\theta_k}(x_k))}(x_{k}), x_{k}) \le 0 } \Jc(w) $ \\
		where
		{\setlength{\mathindent}{0cm}
		\begin{align*}
			\Jc(w) & =  \Big( \hat{J}^{w}(x_k) - r(x_k, U_{k\delta}) - \\
			& \E{\hat{J}^{w_k} ( X_{ \delta(k + 1) } )| X_{k\delta} = x_k, U_{k\delta}=\rho_{\theta_{k}}(x_{k})} \Big)^2
		\end{align*} 
		}	
		$U_{k\delta} \gets \rho_{\theta_{k}}(x_{k})$ \\	
		\textbf{perform} $U_{k\delta}$ \\				
	}
	\caption{CALF as stabilizing actor-critic}
	\label{alg:CALF_AC}
\end{algorithm}

%\end{itemize}

Although the above examples each present a case of single step temporal difference learning, the proposed approach would not in any way inhibit a different choice of temporal difference updates, such as for instance TD($N$) or TD($\lambda$). In fact the stabilizing properties of an algorithm of this kind are not affected by how the updates are performed as long as the said updates do not violate the stabilizing constraints. Though, of course, these choices do affect the resulting total cost.

\begin{rem}
	\label{rem:regularization}
	One can introduce regularization terms into the actor and critic loss functions with some tuning parameters $\beta_{1,2,3} > 0$ as follows:
	\begin{align*}
		& \Ja(\theta, w) + \beta_{1} \nrm{w - w_{k}}^2 + \beta_{2}\nrm{\theta - \theta_{k - 1}}^2 \\	
		& \Jc(w) + \beta_{3} \nrm{w - w_{k}}^2.	
	\end{align*}
	These regularization terms may be employed to regulate weight change thus imitating a learning rate of a gradient-like update rules.
\end{rem}

\begin{rem}
	\label{rem:critc_loss_exp_replay}
	One can formulate the loss \eg in terms of an experience replay of size $M$:
	\begin{equation}
		\label{eqn:critc_loss_exp_replay}
		w_{k + 1} \gets  \argmin\limits_{w \in \{ w' \in \W \ | \ G^{\bar \nu}_\eps(w', w_k, x_{k + 1}, x_{k})\leq 0\} } \sum_{j=0}^{M - 1}e_{j}(w) ;
	\end{equation}
	where $e_j(w) = \big(\hat{J}^{w}(x_{k - j}) -  r(x_{k - j}, U_{k\delta}) - \hat{J}^{w_k}(x_{k + 1 - j})\big)^2$ and the buffer count is backwards relative to $k$.
\end{rem}

\begin{rem}
	\label{rem:discounting}
	The above described loss functions are formulated inspired by the Bellman optimality principle for undiscounted problems.
	Introducing a $\gamma \in (0, 1)$ in front of the predicted $\hat J$ would make it to address a discounted problem instead.
\end{rem}

\begin{rem}
	\label{rem:need_known_LF}
	Notice that Algorithms \ref{alg:CALF_SARSA} and \ref{alg:CALF_AC} explicitly involve $L$, if it is available.
	Another variant of CALF presented in Algorithm \ref{alg:no-lyapunov} lifts this requirement, although both require knowledge of a stabilizing policy. 
\end{rem}

\begin{rem}
	\label{rem:safety-checker}
	An immediate hint on the practical implementation of the approach should be made.
	As Theorem \ref{thm:feas} shows, the problem \eqref{eqn:actor-critic-stab} is feasible at every step, up until the system's state reaches a certain neighbourhood of the origin.
	Thus, one does not necessarily have to run constrained optimization algorithms.
	Instead, one might perform unconstrained optimization and check constraint \eqref{eqn:actor-critic-stab-1}--\eqref{eqn:actor-critic-stab-4} satisfaction afterwards and reset to the initial admissible solution in case of violation.
	In this case, the scheme is somewhat reminiscent of the supervisor-based stabilizing reinforcement learning.
	In fact, one can just add penalty terms involving the constraints to the loss function.
	These considerations can help speed up the calculations.
\end{rem}

\begin{rem}
	\label{rem:min-inference}
	A remark should be made about the minimal inference aspect.
	The constraints \eqref{eqn:actor-critic-stab-1}--\eqref{eqn:actor-critic-stab-4} in general restrict the learning of the agent, which is evident.
	However, it must be pointed out that if $\eps_{4}$, ${\bar \nu}$, $\alpha_{\text{low}}$ and $\alpha_{\text{up}}$ are appropriately configured, the optimal critic weights will not be excluded from the set of attainable critic weights. 
	Such a configuration, however  may fail to exist if the magnitude of noise is too large \ie $\sigma_{\max} \geq  \frac{\min\limits_{s^{\ast} \leq \nrm{x} \leq S^{\ast}}r(x, \rho^{\ast}(x))}{\lip{\hat{J}}\lip{Z}}$, where $\rho^{\ast}$ is the optimal policy. 
	The latter can however be remedied by introducing a scaling factor for the cost itself \ie $r(\cdot, \cdot) \gets a r(\cdot, \cdot)$, where $a > 1$. 
	This would obviously yield an equivalent problem that can be tuned appropriately.
\end{rem}

We can now state the major results about the CALF framework.
The present paper proves that imposing the aforementioned constraints will ensure that the system's state $X_{t}$ will eventually reach a certain neighbourhoods of the origin $\ball_{s^{\ast}}$. The latter radius $s^{\ast}$ is given by:
\begin{align}
\begin{split}
\label{eqn:until-original}
s^{\ast} = \inf \{s \in \mathbb{R} \ | \  &\inf_{s \leq \nrm{x} \leq S^\ast} \nu(x) > 6\lip{\hat{J}}\sigma_{\max}\lip{Z}
 \},
\end{split}
\end{align}
where $S^{\ast}$ is chosen in such a way that $\bar S > S^\ast > \hat{\alpha}_{\text{low}}^{-1}(\alphaup(\nrm{X_0}))$ and the respective Lipschitz constant $\lip{\hat{J}}$ is taken over $\B_{S^\ast}$. For sufficiently small $\bar \delta, \sigma_{\max}$ all of the above values are well defined.

\textit{Theorem 1} states that satisfying the constraints indeed results in such stabilization, while \textit{Therorem 2} states the constraints are in fact guaranteed to be satisfiable up until the system's state reaches $\ball_{s^{\ast}}$.
\begin{thm}
\label{thm:stab}
For a sufficiently small $\delta > 0$ the following implication holds:
\begin{multline}
\label{eqn:implication-in-thm1}
	\forall k \in \mathbb{N}_{0}  \ G^{{\bar \nu}}_\eps(w_k, w_{k-1}, \Lambda_{U_{k\delta}}(X_{k\delta}), X_{k\delta}) \leq 0 \\ \lor \ X_{k\delta} \in \ball_{s^{\ast}} \\
	\implies \\
\inf\{t \in [0, +\infty) \ | \  X_{t} \in \ball_{s^{\ast}}\} < +\infty, \\	
	\end{multline}
	where $0 < \eps < {\bar \nu} - (1 + \frac{1}{3})\lip{\hat{J}}\sigma_{\max}\lip{Z}$.
\end{thm}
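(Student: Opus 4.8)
The plan is to treat $V_k := \hat{J}^{w_k}(X_{k\delta})$ as a (time-varying) Lyapunov function evaluated along the sampled trajectory: show it decreases by a fixed amount per step as long as the constraint $G^{\bar \nu}_\eps \le 0$ keeps holding, and then argue by contradiction. If $X_t$ never entered $\ball_{s^{\ast}}$, the hypothesis would force $G^{\bar \nu}_\eps \le 0$ at every grid instant, hence $V_k \to -\infty$, which is impossible because $V_k$ is bounded below by a positive constant on the region outside $\ball_{s^{\ast}}$.

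The core computation is the one-step increment bound for $V_k$. First I would split the discrepancy between the true successor state and its Euler prediction into a deterministic discretization error and a noise-driven error. Confining the trajectory to a fixed compact set over $[k\delta,(k+1)\delta]$ (so the local Lipschitz data for $f$, $\sigma$ and $\hat{J}$ become effective), a standard Euler estimate gives $\nrm{X^{\textit{noiseless}}_{(k+1)\delta} - \Lambda_{U_{k\delta}}(X_{k\delta})} = \bigo{\delta^2}$, while a Grönwall comparison between the noisy and the noiseless flow, using $\nrm{\sigma}_{\mathrm{op}} \le \sigma_{\max}$ and $\lip{Z}$-Lipschitz continuity of $Z$, gives $\nrm{X_{(k+1)\delta} - X^{\textit{noiseless}}_{(k+1)\delta}} \le \sigma_{\max}\lip{Z}\,\delta\,(1+\bigo{\delta})$. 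Using local Lipschitz continuity of the critic on $\B_{S^{\ast}}$ to transfer these displacements to values of $\hat{J}^{w_k}$, then chaining with constraint \eqref{eqn:actor-critic-stab-3} (decay from $X_{k\delta}$ to the prediction) and with constraint \eqref{eqn:actor-critic-stab-1} taken at index $k+1$ (which replaces $\hat{J}^{w_k}(X_{(k+1)\delta})$ by $\hat{J}^{w_{k+1}}(X_{(k+1)\delta})$ at the cost of $\delta\eps$), I obtain
\[
V_{k+1} - V_k \le \delta\bigl(-\bar\nu + \eps + \lip{\hat{J}}\sigma_{\max}\lip{Z}\bigr) + \bigo{\delta^2}.
\]
Since $\eps < \bar\nu - \tfrac{4}{3}\lip{\hat{J}}\sigma_{\max}\lip{Z}$, and $\bar\nu$ is chosen in the admissible range determined by \eqref{eqn:until-original} and Assumption~\ref{asm:sample-wise-decay}, the bracket is at most $-\tfrac{1}{3}\lip{\hat{J}}\sigma_{\max}\lip{Z} < 0$; hence for $\delta$ small enough the $\bigo{\delta^2}$ term is absorbed and $V_{k+1} - V_k \le -\delta c$ for a fixed $c > 0$, whenever $G^{\bar \nu}_\eps \le 0$ holds at indices $k$ and $k+1$.

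Two auxiliary facts then close the argument. First, the sampled state cannot escape $\B_{S^{\ast}}$: with $w_0 = w^{\#}_\zeta$ (Assumption~\ref{asm:struct-equiv}) one has $V_0 = \zeta L(X_0) \le \zeta\alphaup(\nrm{X_0})$, and while the constraints hold $V_k$ is nonincreasing, so the left inequality in \eqref{eqn:actor-critic-stab-4} gives $\hatalphalow(\nrm{X_{k\delta}}) \le V_k \le V_0$, whence $\nrm{X_{k\delta}} \le \hatalphalow^{-1}(\zeta\alphaup(\nrm{X_0})) < S^{\ast}$ by the choice of $S^{\ast}$; for $\delta$ small the prediction $\Lambda_{U_{k\delta}}(X_{k\delta})$ and the true successor $X_{(k+1)\delta}$ then also lie in $\B_{S^{\ast}}$, justifying the use of $\lip{\hat{J}}$ taken over $\B_{S^{\ast}}$. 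Second, outside $\ball_{s^{\ast}}$ one has $\nrm{X_{k\delta}} \ge s^{\ast}$, hence $V_k \ge \hatalphalow(s^{\ast}) > 0$. Now suppose, for contradiction, that $X_t \notin \ball_{s^{\ast}}$ for all $t \ge 0$; then $X_{k\delta} \notin \ball_{s^{\ast}}$ for every $k$, so the hypothesis forces $G^{\bar \nu}_\eps \le 0$ at every index, the one-step bound applies at every step, and $V_k \le V_0 - k\delta c \to -\infty$, contradicting $V_k \ge \hatalphalow(s^{\ast}) > 0$. Therefore $X_{k^{\ast}\delta} \in \ball_{s^{\ast}}$ for some finite $k^{\ast}$, i.e.\ $\inf\{t \ge 0 : X_t \in \ball_{s^{\ast}}\} \le k^{\ast}\delta < +\infty$.

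I expect the main obstacle to be the rigorous one-step stochastic increment estimate and its constant bookkeeping: because $f$, $\sigma$ and $\hat{J}$ are only locally Lipschitz and $Z$ is merely a Lipschitz random process, one must first secure the a priori confinement to a compact set — over each sampling interval and, through the monotonicity of $V_k$, over the whole horizon — before the Grönwall and Euler estimates become available, and then the Lipschitz and Grönwall constants must be tracked precisely enough that the decay rate stays positive under exactly the stated bound on $\eps$. The slack between the constant $6$ in the definition \eqref{eqn:until-original} of $s^{\ast}$ and the factor $\tfrac{4}{3}$ appearing here is exactly the budget reserved for the discretization error and for the inter-sample excursion of the continuous-time trajectory.
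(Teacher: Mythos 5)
Your proposal is correct and takes essentially the same route as the paper's proof: the same one-step decrement bound for $\hat J^{w_k}(X_{k\delta})$ obtained from the Euler-plus-noise displacement estimate, the Lipschitz continuity of the critic on $\B_{S^\ast}$, and constraints \eqref{eqn:actor-critic-stab-3} at step $k$ and \eqref{eqn:actor-critic-stab-1} at step $k+1$; the same confinement to $\B_{S^\ast}$ via $w_0=w^{\#}$, monotonicity and \eqref{eqn:actor-critic-stab-4} (which the paper packages as Lemmas~\ref{lem:escape-1}--3); and the same concluding contradiction between the forced decrease of the critic and its positive lower bound outside $\ball_{s^{\ast}}$.
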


\begin{thm} 
\label{thm:feas}
For a sufficiently small $\delta > 0$ there exists such $\bar \nu$ that the following implication holds:
\begin{multline}
	\forall k < K  \ G^{\bar{\nu}}_\eps(w_k, w_{k-1}, \Lambda_{U_{k\delta}}(X_{k\delta}), X_{k\delta}) \leq 0 \ \lor \ X_{k\delta} \in \ball_{s^{\ast}} \\
	\implies \\
		\exists w \in \W, u \in \U   \ G^{\bar{\nu}}_\eps(w, w_{K - 1}, \Lambda_{u}(X_{K\delta}), X_{K\delta}) \leq 0 \ \\ \lor \ X_{K\delta} \in \ball_{s^{\ast}} \\
	\end{multline}
where $0 < \eps < \bar \nu - (1 + \frac{1}{3})\lip{\hat{J}}\sigma_{\max}\lip{Z}$.
\end{thm}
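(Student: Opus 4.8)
The plan is to prove this by induction on $K$, arguing in the regime where $X_{k\delta}\notin\ball_{s^\ast}$ for every $k\le K-1$ (configurations in which the state has already visited $\ball_{s^\ast}$ either satisfy the conclusion outright or reduce to a thin-shell case treated with the same estimates). At step $K$ I would either note $X_{K\delta}\in\ball_{s^\ast}$, or exhibit an explicit feasible pair: $u=\eta(X_{K\delta})$, the stabilizing sample-and-hold action of Assumption~\ref{asm:sample-wise-decay}, and $w=w^{\#}_{1}$ from Assumption~\ref{asm:struct-equiv}, so that $\hat J^{w}\equiv L$. For this pair two of the four scalar constraints are immediate: \eqref{eqn:actor-critic-stab-2} reads $L(\Lambda_u(X_{K\delta}))\le L(\Lambda_u(X_{K\delta}))$, and \eqref{eqn:actor-critic-stab-4} follows from Assumption~\ref{asm:CLF}~ii) together with $\hatalphalow(\nrm{x})\le\alphalow(\nrm{x})$ and $\alphaup(\nrm{x})\le\hatalphaup(\nrm{x})$. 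In parallel I would carry the invariant $\nrm{X_{k\delta}}<S^\ast$, so that the uniform Lipschitz constant $\lip{\hat J}$ of the critic — which also bounds $\lip{L}=\lip{\hat J^{w^{\#}_{1}}}$ — and the bound on $f$ are legitimate on the working set $\ball_{S^\ast}$.

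For \eqref{eqn:actor-critic-stab-3} I would expand the Euler step $\Lambda_u(x)=x+\delta f(x,\eta(x))$ and combine the decay of Assumption~\ref{asm:CLF}~i), $\mathcal{L}_{f(x,\eta(x))}L(x)\le-\nu(x)$, with local Lipschitzness of $L$ and boundedness of $f$ to get a one-step estimate $L(\Lambda_u(x))-L(x)\le-\delta\nu(x)+C\delta^{2}$. On the region where \eqref{eqn:actor-critic-stab-3} must hold, $s^\ast<\nrm{x}\le S^\ast$, the definition \eqref{eqn:until-original} of $s^\ast$ forces $\nu$ to exceed $6\lip{\hat J}\sigma_{\max}\lip{Z}$ there. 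So it suffices to fix $\bar\nu\in\bigl(\tfrac43\lip{\hat J}\sigma_{\max}\lip{Z},\ \inf_{s^\ast<\nrm{x}\le S^\ast}\nu(x)\bigr)$ — nonempty since $\tfrac43<6$ — and take $\delta$ small enough to absorb $C\delta^{2}$; the range $0<\eps<\bar\nu-\tfrac43\lip{\hat J}\sigma_{\max}\lip{Z}$ for $\eps$ (the constant $1+\tfrac13$ in the statement) is then also nonempty.

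The crux is \eqref{eqn:actor-critic-stab-1}, i.e.\ $L(X_{K\delta})\le\hat J^{w_{K-1}}(X_{K\delta})+\delta\eps$. By the induction hypothesis the constraint $G^{\bar\nu}_\eps(w_{K-1},w_{K-2},\Lambda_{U_{(K-1)\delta}}(X_{(K-1)\delta}),X_{(K-1)\delta})\le 0$ held, and in particular its component \eqref{eqn:actor-critic-stab-2} gives $\hat J^{w_{K-1}}(\Lambda_{U_{(K-1)\delta}}(X_{(K-1)\delta}))\ge L(\Lambda_{U_{(K-1)\delta}}(X_{(K-1)\delta}))$. The quantitative heart of the proof is to bound the deviation of the realized state $X_{K\delta}$ from this one-step Euler prediction: since $Z$ is $\lip{Z}$-Lipschitz and $\nrm{\sigma(x,u)}_{\text{op}}\le\sigma_{\max}$, the stochastic increment over one sampling interval is at most $\sigma_{\max}\lip{Z}\delta$ and the drift discretization error is $O(\delta^{2})$, whence $\nrm{X_{K\delta}-\Lambda_{U_{(K-1)\delta}}(X_{(K-1)\delta})}\le\sigma_{\max}\lip{Z}\delta+O(\delta^{2})$. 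Feeding this into the uniform Lipschitz bounds yields $\hat J^{w_{K-1}}(X_{K\delta})\ge L(X_{K\delta})-2\lip{\hat J}\sigma_{\max}\lip{Z}\delta-O(\delta^{2})$, so \eqref{eqn:actor-critic-stab-1} holds once $\delta$ is small and $\delta\eps$ dominates this $O(\delta)$ term. Reconciling this last requirement on $\eps$ with the window already pinned down by \eqref{eqn:actor-critic-stab-3} and by the $6\lip{\hat J}\sigma_{\max}\lip{Z}$ threshold in \eqref{eqn:until-original} is, I expect, the main obstacle: it is exactly here that the precise numerical constants in the theorem and in the definition of $s^\ast$ are spent, and a sloppy accounting of the prediction error would break the argument.

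It remains to close the invariant $\nrm{X_{k\delta}}<S^\ast$ and the base case. The invariant is maintained by chaining \eqref{eqn:actor-critic-stab-1}, \eqref{eqn:actor-critic-stab-3} and the same prediction-error estimate across one step: $\hat J^{w_{k+1}}(X_{(k+1)\delta})\le\hat J^{w_{k}}(X_{k\delta})-\delta\bigl(\bar\nu-\eps-\lip{\hat J}\sigma_{\max}\lip{Z}\bigr)+O(\delta^{2})$, and since $\bar\nu-\eps-\lip{\hat J}\sigma_{\max}\lip{Z}>\tfrac13\lip{\hat J}\sigma_{\max}\lip{Z}>0$ for $\delta$ small, $\hat J^{w_{k}}(X_{k\delta})$ never exceeds $\hat J^{w_{0}}(X_{0})=\zeta L(X_{0})\le\zeta\,\alphaup(\nrm{X_{0}})$; combined with $\hatalphalow(\nrm{X_{k\delta}})\le\hat J^{w_{k}}(X_{k\delta})$ from \eqref{eqn:actor-critic-stab-4} and $S^\ast>\hatalphalow^{-1}(\alphaup(\nrm{X_{0}}))$ (with $\zeta$ taken $\le 1$, or $S^\ast$ enlarged correspondingly), this keeps the trajectory in $\ball_{S^\ast}$. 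The base case $K\le 1$ is a direct verification of the four constraints for the witness $(w^{\#}_{1},\eta)$, using the initialization $w_{0}=w^{\#}_{\zeta}$, the bound $\nrm{X_{0}-X_{\delta}}=O(\delta)$, and positivity of $L$ away from the origin.
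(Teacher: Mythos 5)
Your proposal is correct and follows essentially the same route as the paper's proof: the same witness pair $(w^{\#},\eta(X_{K\delta}))$ making \eqref{eqn:actor-critic-stab-2} trivial and \eqref{eqn:actor-critic-stab-4} immediate, the same crux argument for \eqref{eqn:actor-critic-stab-1} via the previous step's constraint \eqref{eqn:actor-critic-stab-2} plus the one-step prediction-error bound $\lip{\hat J}\bigl(\sigma_{\max}\lip{Z}\delta + \lip{f}(\bar f+\sigma_{\max}\lip{Z})\delta^2\bigr)$, and the same boundedness invariant keeping the trajectory in $\ball_{S^\ast}$ (the paper isolates this in its preliminary lemmas); the constant bookkeeping you flag as the main obstacle closes exactly as you anticipate, with the paper fixing $\bar\nu := 4\lip{\hat J}\sigma_{\max}\lip{Z}$ and $\eps := \tfrac{\bar\nu}{2}+\tfrac{\bar\nu}{6}$. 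The only minor deviation is in \eqref{eqn:actor-critic-stab-3}, where you invoke the continuous-time decay of Assumption \ref{asm:CLF} along $\eta$ (which is only stated as an infimum over $u$, with rate $\nu_0$), whereas the paper uses the sample-wise decay of Assumption \ref{asm:sample-wise-decay} for $\eta$ and absorbs the prediction error with the same Lipschitz bounds — an immaterial difference.
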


Here, $s^{\ast}$ is the distance from the origin within which we can no longer guarantee feasibility of \eqref{eqn:actor-critic-stab}.
Naturally, in the presence of noise with some positive amplitude there will be such a neighbourhood of the origin, where it is no longer possible to reliably reduce the distance to the origin.
It should be noted here that disturbance attenuation methods might be used to augment the resulting policy to reduce the described noise effect, but it is beyond the scope of this work. 
The next theorem concerns feasibility of the CALF optimization problem.

The next section presents the experimental evaluation.

%%%%%%%%%%%%%%%%%%%%%%%%%%%%%%%%%%%%%%%%%%%%%%%%%%%%%%%%%%%%%%%%%%%%%%%%%%%%%%%%%%%%%%%%%%%%
%%%%%%%%%%%%%%%%%%%%%%%%%%%%%%%%%%%%%%%%%%%%%%%%%%%%%%%%%%%%%%%%%%%%%%%%%%%%%%%%%%%%%%%%%%%%
\section{Experiments}
\label{sec:experiment}

%%%%%%%%%%%%%%%%%%%%%%%%%%%%%%%%%%%%%%%%%%%%%%%%%%%%%%%%%%%%%%%%%%%%%%%%%%%%%%%%%%%%%%%%%%%%
\subsection{Preliminaries of experimental validation}
\label{sub:experiment-prelim}

Experiments were performed with a mobile robot under CALF in Algorithm \ref{alg:CALF_SARSA} variation with the critic loss defined via temporal difference over an experience replay as per \eqref{eqn:critc_loss_exp_replay}.
The noiseless dynamics part of the robot were described as:
\begin{equation}
	\label{eqn:sys-3wrobot}
	\diff X_t = f_{\text{Cart}}(X_t, U_t)\diff t,
\end{equation}
where $T_{\text{Cart}}(x_1, x_2, x_3, u_1, u_2) = \left( u_1 \cos x_3, u_1 \sin x_3, u_2 \right)^\top$, $x_1, x_2, x_3$ are the Cartesian coordinates and angle, and $u_1, u_2$ are the linear and angular speed, respectively.
Both actor and critic were implemented as neural networks.
The running cost was considered in the following form:
\begin{equation}
\begin{aligned}
	& \rho = \chi^\top H \chi, \\
\end{aligned}
\end{equation}
where $\chi = [x-x^*, u]$, $H$ is a diagonal, positive-definite, running cost matrix and $x^*$ is the target position (see details in the next section). 
The running cost matrices were taken in several variants, to differently prioritize the state and action components of the running cost, namely:
\begin{equation}
\label{eqn:running-cost-matrices}
	\begin{aligned}
		& H_1 = \diag{10, 10, 10, 0, 0}, H_2 = \diag{10, 10, 10, 1, 1}, \\
		& H_3 = \diag{10, 10, 100, 0, 0}, H_4 = \diag{10, 100, 10, 0, 0}, \\
		& H_5 = \diag{100, 10, 10, 0, 0}. \\
	\end{aligned}
\end{equation}
%The penalty term was implemented in an indicator manner as follows: 
%\begin{equation}
%\begin{aligned}
% 	\begin{cases}
%		& 0, \text{ if constraints C1--C4 are satisfied}, \\
%		& \Lambda \cdot \sum_{i=1}^{4}(K_i(x, w, \wo, \vartheta) - \eps_i), \text{ otherwise},
%	\end{cases}
%\end{aligned}
%\end{equation}
%where $K_{i}$ is the actual value of each $(Ci)$th constraint of \eqref{eqn:actor-critic-stab-1}--\eqref{eqn:actor-critic-stab-4}, and $\Lambda$ is a penalty parameter, chosen relatively big.

The model of the mobile robot is also known as the nonholonomic integrator (NI) $(\dot x_{\text{NI}1}, \dot x_{\text{NI}2}, \dot x_{\text{NI}3})^\top = (u_{\text{NI}1}, u_{\text{NI}2}, x_{\text{NI}1} u_{\text{NI}2} - x_{\text{NI}2} u_{\text{NI}1} )^\top $, for which various stabilization methods exist.
We took $L(x_{\text{NI}}) = \min \limits_{\zeta \in [-\rho, \rho]} \left\{ x_{\text{NI}1}^4 + x_{\text{NI}2}^4 + \tfrac{\abs{x_{\text{NI}3}}^3}{(x_{\text{NI}1} \cos (\zeta) + x_{\text{NI}2} \sin(\zeta) + \sqrt{\abs{x_{\text{NI}3}}})^2} \right\}$ and the corresponding nominal parking policy as per \cite{Kimura2015}, while using a transformation between the NI and Cartesian coordinates of \eqref{eqn:sys-3wrobot} was used.

It should be noted that the goal of the experimental validation was to physically demonstrate the capabilities of the CALF algorithm to improve accumulated cost compared to the nominal policy under various $H$ matrices.
The goal here was not to come up with the best parking controller which can be done by numerous ways, including model-predictive control, sliding-mode control etc.

The software core of this work is the specially designed and programmed Python package for hybrid simulation of agents and environments (generally speaking, not necessarily reinforcement learning agents) called \texttt{rcognita} \cite{rcognita2020}.
Its main idea is to have an explicit implementation of sample-and-hold controls with user-defined sampling time specification.
%A detailed description is given in Section \ref{sub:rcognita} of the appendix.
For the purposes of this study, \texttt{rcognita} was coupled with the Robot Operating System (ROS) to communicate with a \textit{Robotis TurtleBot 3}, the hardware platform used (details follow in the next Section).
Figure \ref{fig:rcognita-ROS} illustrates a flowchart of interconnection of \texttt{rcognita} and ROS.

\begin{figure}[h]
	\centering
	\includegraphics[width=0.99\columnwidth]{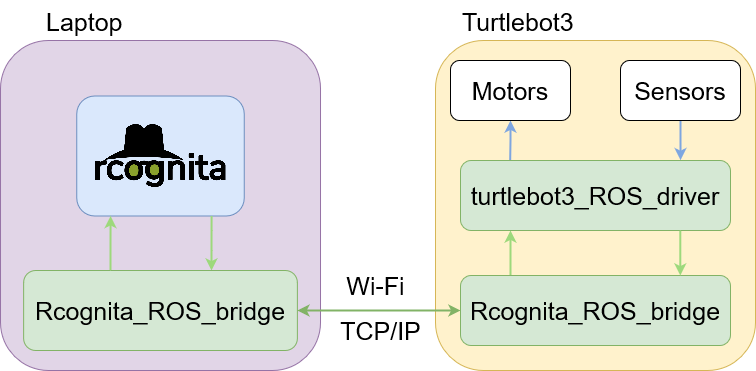}
	\caption{Flowchart of the interconnection of \texttt{rcognita} and Robotis TurtleBot 3, the robot used in the current study.}
	\label{fig:rcognita-ROS}
\end{figure}

%%%%%%%%%%%%%%%%%%%%%%%%%%%%%%%%%%%%%%%%%%%%%%%%%%%%%%%%%%%%%%%%%%%%%%%%%%%%%%%%%%%%%%%%%%%%
\subsection{Hardware and experiment setup}
\label{sub:experim-setup}

The hardware platform of the current study was a mobile wheeled robot Robotis TurtleBot 3 equipped with a lidar, an inertia measurement unit (IMU) for implementation of the linear and angular speed control, as well as for dead reckoning, which is also fused with the lidar data for position determination.

Experiments were run on a test polygon with concrete floor with markings of coordinate axes and 20 cm step nodes.
The robot started in the center with a fixed orientation and drove to on of the target positions as shown in Fig. \ref{fig:setup} (a).

\begin{figure*}
	\centering
	\begin{subfigure}[b]{0.45\textwidth}
		\centering
		\includegraphics[width=0.75\columnwidth]{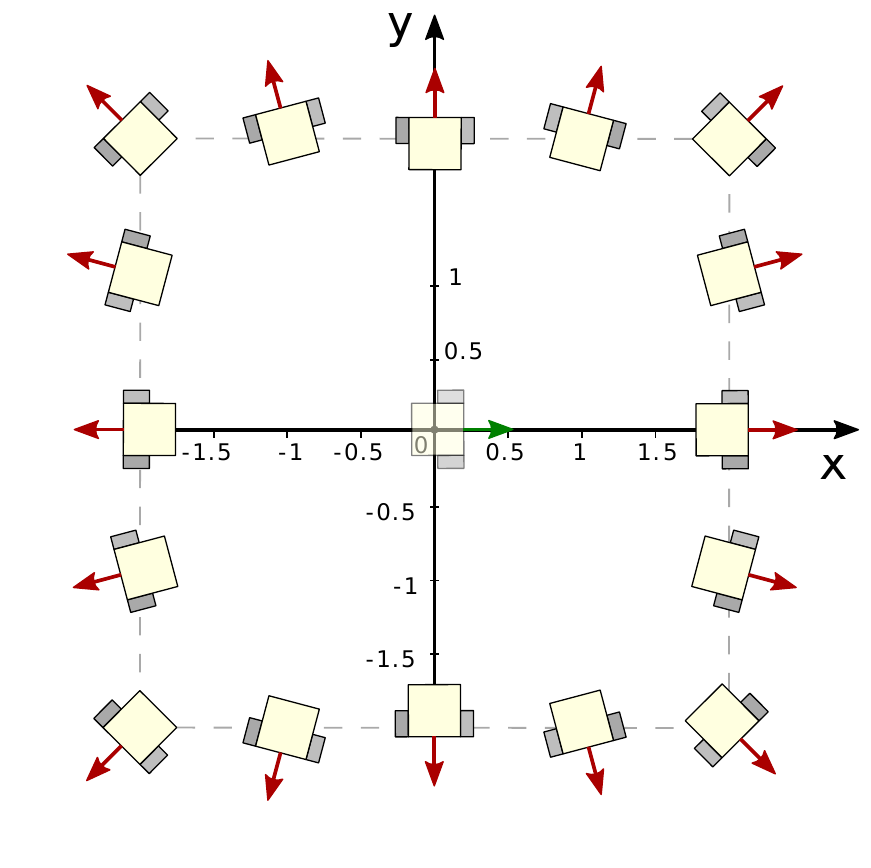}
		\caption{Target positions and orientations (marked \red{red}) for the robot. The initial orientation is marked \green{green}. The units are meters.}
	\end{subfigure}
	\hspace{10pt}
	\begin{subfigure}[b]{0.45\textwidth}
		\centering
		\includegraphics[width=0.75\textwidth]{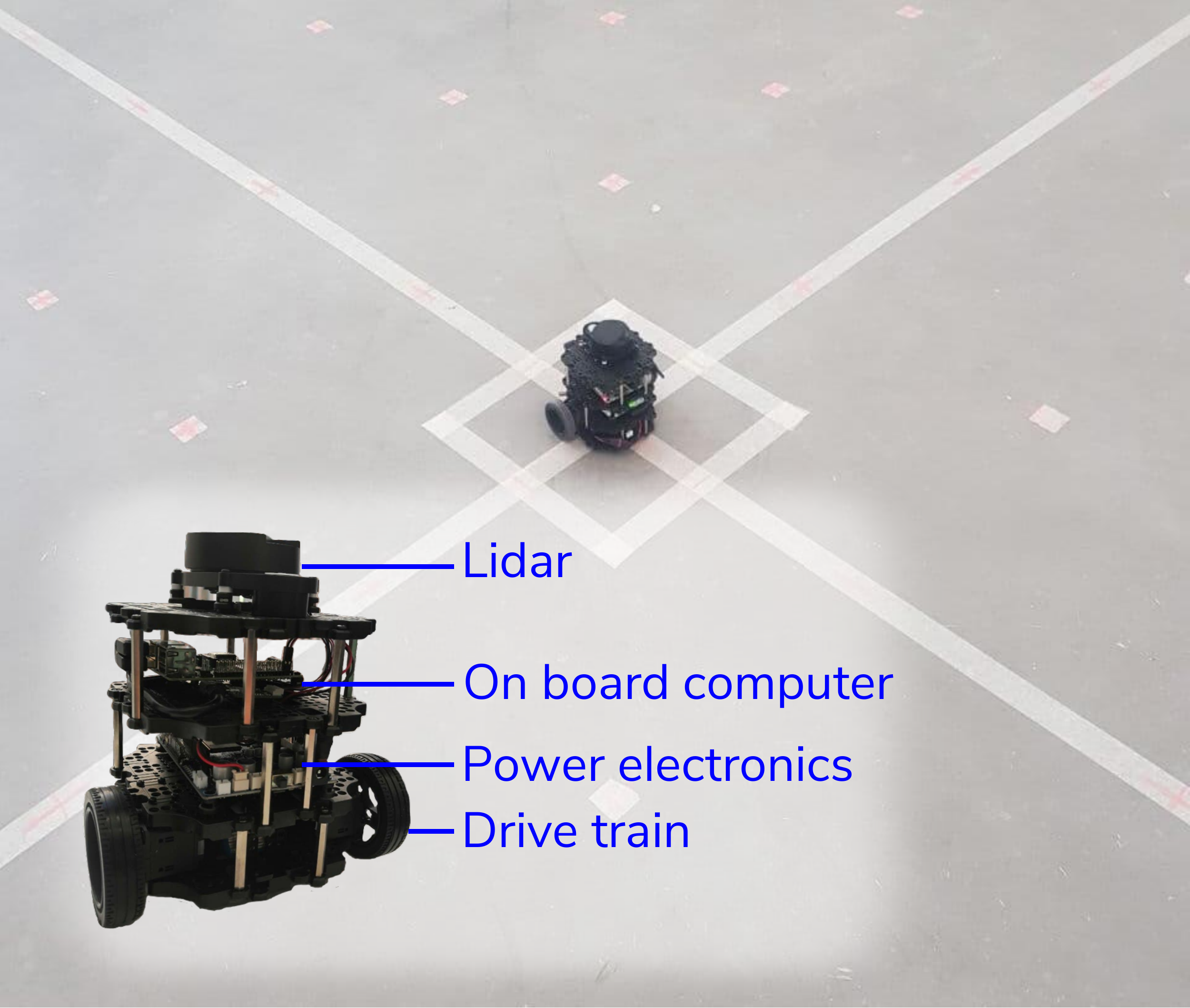}
		\caption{Robotis TurtleBot3 and its key components on the experiment polygon.}
	\end{subfigure}
	\caption{Overview of the experimental setup.}
	\label{fig:setup}
\end{figure*}

For each running cost matrix $H$, 32 runs were performed \ie for each target position and, respectively, the nominal and CALF agents.
The agent performance was evaluated by the accumulated running cost of 120 s, the total run time.
The agent sampling time was set to 0.05 sec which was fairly sufficient for the real-time robot control.
The results are presented in the next section.

%%%%%%%%%%%%%%%%%%%%%%%%%%%%%%%%%%%%%%%%%%%%%%%%%%%%%%%%%%%%%%%%%%%%%%%%%%%%%%%%%%%%%%%%%%%%
\subsection{Results}
\label{sub:results}

\begin{figure}
	\centering
	\includegraphics[width=0.99\columnwidth]{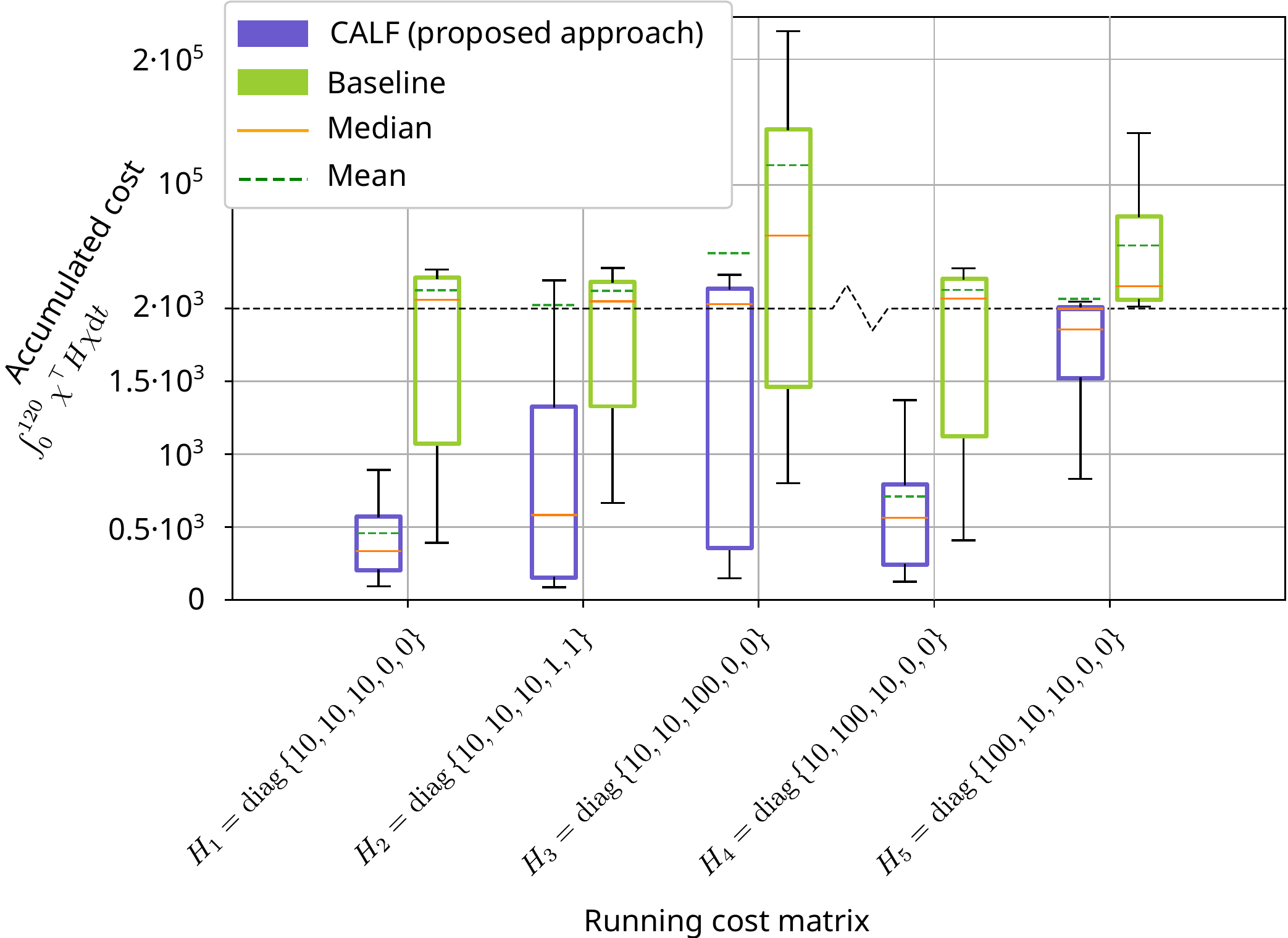}
	\caption{The accumulated running cost of the agents \ie $\int_0^{120} \chi^\top H \chi \diff t$ (see Section \ref{sub:experiment-prelim}).
	The box bounds are, respectively, the first and third quartiles Q1, Q3.
	The whiskers are the same quartiles plus/minus one and a half interquartile range Q3-Q1.
	The y-axis scale is split for better reading.}
	\label{fig:CALF-vs-nominal-accum-cost}
\end{figure}

The statistics of the accumulated running cost as per $\int_0^{120} \chi^\top H \chi \diff t$ for different running cost matrices under the CALF and nominal agents are shown in Fig. \ref{fig:CALF-vs-nominal-accum-cost}.
Each statistic was computed over multiple runs for each agent type and running cost matrix (see Fig. \ref{fig:setup}).
We observed that the CALF agent yielded considerably lower accumulated running cost showing statistically significant improvement under all selected running cost matrices.

\section{Discussion and extensions}

It would be fair to point out that in mechanical control engineering it is not uncommon to be aware of a stabilizing policy, while not knowing the corresponding Lyapunov function. However it is not difficult to adapt the proposed approach to a scenario of this kind. For instance, consider \textit{Algorithm \ref{alg:no-lyapunov}}.

\begin{algorithm}
\KwIn{${\bar \nu} > 0, \zeta > 0, \eta(\cdot) \text{ -- stabilizing policy}$}

$x_{0} \gets X_{0}$;\\
$w_{0} \gets $ \textbf{arbitrary};\\
$u_{0} \gets  \eta(x_{0})$;\\
$w_{\text{prev}} \gets w_{0}$;\\
$x_{\text{prev}} \gets x_{0}$;\\

\For{$k := 1...\infty$}{
%	Compute critic change before critic update: \\
%	$\Delta \hat J^{w_{k-1}}_k := \hat J^{w_{k-1}}(x_k) - \hat J^{w_{k-1}}(x_{k-1})$ \\
    \textbf{perform} $U_{(k - 1)\delta}$;\\
    $x_{k} \gets$ \textbf{observe} $X_{k\delta}$;\\
	Update the critic: \\
	{\setlength{\mathindent}{0cm}	
	\[	
	\begin{array}{lll}
		& w^* \gets &  \argmin \limits_{w \in \mathbb W}  \big(\hat{J}^{w}(x_{k-1}) - r_{\text{prev}} - \hat{J}^{\wo}(x_{k})\big)^2 \\
		& & \sut  \quad \substack{ \hat J^w(x_{k}) - \hat J^{w_{\text{prev}}}(x_{\text{prev}}) \leq -{\bar \nu} \delta,\\
		 \hat \alpha_{\text{low}}(\nrm{x_{k}}) \leq \hat J^{w}(x_{k}) \leq \hat \alpha_{\text{up}}(\nrm{x_{k}}) ;}
	\end{array}
	\]
	} \\	
	\eIf{ solution $w^*$ found}
	{	
		$w_{k} \gets w^{*}$\\
		Update the actor: \\
	{\setlength{\mathindent}{0cm}			
	\[
	\begin{array}{lll}
		& U_{k \delta} \la \argmin\limits_{u \in \U} \left\{r(x_k, u) + \hat{J}^{w_k}(\Lambda_u(x_k))\right\};
	\end{array}		
	\] 
	} \\
	$x_{\text{prev}} \gets x_{k}$\\
	$w_{\text{prev}} \gets w_{k}$\\
	$r_{\text{prev}} \gets r(x_{k}, u_{k})$
	}
	{
	Invoke a stabilizing action: \\
	$U_{k \delta} \la \eta(x_{k})$ \\
	$r_{\text{prev}} \gets r(x_{k}, u_{k}) + r_{\text{prev}}$;		
	}	
}
\caption{Stabilizing-policy-only CALF
%The actor's loss is denoted $\J^a$.
}
\label{alg:no-lyapunov}
\end{algorithm}

The above algorithm relies on similar considerations to those that validate the original results. Though instead of imposing (C2) it instead falls back to the stabilizing policy to maintain decay whenever the critic-based update fails to do so. This way we're merely performing  a kind of on-policy learning, except unlike SARSA the policy that is being evaluated does not perform exploratory actions, but instead sometimes performs stabilizing actions when necessary.

Much like in the case of the original approach, the algorithm ensures a certain kind of stability.

\begin{thm}
\label{thm:CALF-2}
For a sufficiently small $\delta > 0$ there exists such $\bar \nu$, that \textit{Algorithm \ref{alg:no-lyapunov}} will ensure 
\begin{multline}
\inf\{t \in [0, +\infty) \ | \  X_{t} \in \ball_{s^{\ast}}\} < +\infty. \\	
\end{multline}
\end{thm}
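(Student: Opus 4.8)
The plan is to reduce Theorem~\ref{thm:CALF-2} to the already-established Theorem~\ref{thm:stab} by showing that the trajectory generated by Algorithm~\ref{alg:no-lyapunov} satisfies, at every step, a decay estimate of exactly the form exploited in the proof of Theorem~\ref{thm:stab}, up until the state enters $\ball_{s^{\ast}}$. The key observation is that the algorithm has two modes at each step $k$: either the constrained critic subproblem is feasible, in which case the \emph{solved} constraints force $\hat J^{w_k}(x_k) - \hat J^{w_{\text{prev}}}(x_{\text{prev}}) \le -\bar\nu\delta$ together with the sandwich bound $\hatalphalow(\nrm{x_k}) \le \hat J^{w_k}(x_k) \le \hatalphaup(\nrm{x_k})$; or the subproblem is infeasible and the algorithm applies the known stabilizing policy $\eta$, which by Assumption~\ref{asm:sample-wise-decay} yields $L(X_{(k+1)\delta}) - L(X_{k\delta}) \le -\delta\nu(X_{k\delta})$ as long as $\nrm{X_{k\delta}}$ lies in the relevant annulus. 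So the proof splits into handling these two regimes and showing they can be stitched together into a single Lyapunov-type descent argument on a common surrogate.

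First I would fix the parameters: choose $\bar\nu$ and $\eps$ exactly as in Theorems~\ref{thm:stab}--\ref{thm:feas} (so that $0 < \eps < \bar\nu - (1+\tfrac13)\lip{\hat J}\sigma_{\max}\lip{Z}$), and take $\bar\delta$ small enough that all the quantities entering $s^{\ast}$, $S^{\ast}$, and the annulus $(\bar s_{\sigma_{\max},\bar\delta}, \bar S_{\sigma_{\max},\bar\delta})$ are well defined and ordered as required; also invoke the limits $\bar s_{\sigma_{\max},\bar\delta}\to 0$, $\bar S_{\sigma_{\max},\bar\delta}\to+\infty$ to guarantee the annulus contains the relevant sublevel set of the initial condition. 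Next I would track the effective ``critic value along the trajectory,'' i.e.\ the quantity that plays the role of $\hat J^{w_k}(X_{k\delta})$ in Theorem~\ref{thm:stab}, and argue it is non-increasing with a strict per-step decrement whenever the state is outside $\ball_{s^{\ast}}$: in the critic-feasible branch this is immediate from the imposed constraint (C3-analog) plus the inter-sample perturbation bound $\abs{\hat J^w(X_{(k+1)\delta}) - \hat J^w(\Lambda_{U_{k\delta}}(X_{k\delta}))} \le \lip{\hat J}\sigma_{\max}\lip{Z}\delta$ that already underlies the proof of Theorem~\ref{thm:stab}; in the fallback branch one uses the $L$-decay from Assumption~\ref{asm:sample-wise-decay} and the sandwich bounds $\hatalphalow \le L \le \hatalphaup$ (via $\hat\alpha_{1,2}$) to convert it back into a decrement of the same surrogate, absorbing the bookkeeping that $r_{\text{prev}}$ is accumulated across fallback steps. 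The sandwich bound $\hatalphalow(\nrm{x_k}) \le \hat J^{w_k}(x_k) \le \hatalphaup(\nrm{x_k})$, which is enforced in \emph{both} branches (it is part of the critic subproblem, and in the fallback branch it holds for the critic value carried over), is what prevents the surrogate from decreasing while $\nrm{X_{k\delta}}$ stays large, hence forces $\nrm{X_{k\delta}}$ itself down — this is the standard Lyapunov squeeze, identical to the one in Theorem~\ref{thm:stab}. Once the surrogate has dropped below $\hatalphalow(s^{\ast})$, positivity of $\hatalphalow$ forces $\nrm{X_{k\delta}} < s^{\ast}$, and the uniform-in-$\nrm{X_0}$ per-step decrement gives a finite bound on the number of steps, hence $\inf\{t : X_t \in \ball_{s^{\ast}}\} < +\infty$.

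The main obstacle I anticipate is the \textbf{handoff between the two branches}. In the original CALF analysis, feasibility (Theorem~\ref{thm:feas}) guarantees the critic subproblem is \emph{always} solvable until $\ball_{s^{\ast}}$ is reached, so there is no fallback; here the algorithm \emph{chooses} to fall back whenever it fails to find $w^{*}$, and one must show that (a) the fallback step does not corrupt the invariant maintained by the feasible steps — in particular that after a run of fallback steps the constrained critic problem becomes feasible again with $w_{\text{prev}}, x_{\text{prev}}$ set to pre-fallback values, which requires re-deriving a feasibility statement in the spirit of Theorem~\ref{thm:feas} but with the $\eta$-driven segment playing the role of the predicted step; and (b) the accumulated running cost $r_{\text{prev}}$ used in the critic loss after a fallback run does not break the monotonicity of the surrogate (it only enters the loss, not the constraints, so this should be benign, but it must be stated carefully). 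Concretely I would prove a lemma: \emph{along any trajectory of Algorithm~\ref{alg:no-lyapunov}, for every $k$ with $X_{k\delta}\notin\ball_{s^{\ast}}$, the surrogate drops by at least $\delta(\bar\nu - (1+\tfrac13)\lip{\hat J}\sigma_{\max}\lip{Z} - \eps)$ (feasible branch) or by at least $\delta\,\inf_{s^{\ast}\le\nrm{x}\le S^{\ast}}\nu(x)/C$ for a structural constant $C$ relating $L$ to $\hat J$ via $\hat\alpha_{1,2}$ (fallback branch)}, and then the finite-time claim follows verbatim from the argument used to close Theorem~\ref{thm:stab}. The rest is routine: continuity/compactness to make all constants uniform, and the same inter-sample estimate already invoked earlier to control behavior between sampling instants.
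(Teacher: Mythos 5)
There is a genuine gap, and it is exactly at the point you flag as the main obstacle: the stitching of the two branches into a single monotone surrogate. During a fallback step the critic is \emph{not} updated --- $w_{\text{prev}}$ and $x_{\text{prev}}$ stay frozen at their values from the last successful critic step --- so the quantity playing the role of $\hat J^{w_k}(X_{k\delta})$ simply does not move, and the sample-wise decay of $L$ under $\eta$ cannot be ``converted back into a decrement of the same surrogate'': a decrease of $L(X_{k\delta})$ says nothing quantitative about the frozen number $\hat J^{w_{\text{prev}}}(x_{\text{prev}})$, and no structural constant $C$ relating $L$ to $\hat J$ via $\hat\alpha_{\text{low}},\hat\alpha_{\text{up}}$ exists in general (these are arbitrary $\K_\infty$ envelopes). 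A hybrid surrogate (use $L$ on fallback steps, $\hat J$ on success steps) is not monotone either: at a success step the new critic value is only constrained to lie $\bar\nu\delta$ below the \emph{previous successful} critic value and within $[\hatalphalow(\nrm{x_k}),\hatalphaup(\nrm{x_k})]$, so it can sit above the current $L(x_k)$; conversely $L$ can jump upward after a success step. This is why the paper does not produce a per-step decreasing Lyapunov-like quantity at all. Instead it runs two separate ``clocks'': $\hat J^\circ_k$ decreases by at least $6\lip{\hat J}\sigma_{\max}\lip{Z}\delta$ at every critic-success step and is constant otherwise, while $L_k$ decreases at every fallback step but is merely bounded above by $\alphaup(\hatalphalow\inv(\hat J^\circ_0))$ after success steps (via the sandwich bounds, which also give the overshoot bound $S^\ast$); counting successes by $\hat T$ and fallback runs by $T_L$ yields the mixed-case reaching-time bound $T^*(T^*-1)$ with $T^*=\max\{\hat T, T_L\}$. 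Note the bound is quadratic in the initial level, which already signals that no single per-step-decreasing surrogate (which would give a linear bound) can be behind the result.

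A secondary misdirection in your plan: you propose to re-derive a feasibility statement ``in the spirit of Theorem \ref{thm:feas}'' showing the critic subproblem becomes feasible again after a fallback run. The paper neither needs nor proves this --- Algorithm \ref{alg:no-lyapunov} is convergent even if the critic never succeeds again, because in that case $\eta$ alone drives the state to $\ball_{s^{\ast}}$ by Assumption \ref{asm:sample-wise-decay}; and such a re-feasibility claim would in any case be hard to establish, since the constraint references the frozen pair $(w_{\text{prev}},x_{\text{prev}})$. Your treatment of $r_{\text{prev}}$ (it only enters the loss, not the constraints) is correct but peripheral. The feasible-branch decrement estimate you state is fine and matches the estimates underlying Theorem \ref{thm:stab}; the missing idea is the two-function counting argument that replaces the single-surrogate descent.
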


A brief testing procedure was too performed for \textit{Algorithm \ref{alg:no-lyapunov}} that involved the same set-up as described in section \ref{sec:experiment}. The performance of the algorithm was compared against that of both the nominal policy and the non-stabilizing version of \textit{Algorithm \ref{alg:no-lyapunov}} \ie the result of stripping constraints and nominal stabilizing actions from the latter. \textit{Table \ref{tbl:extra}} indicates a significant improvement as compared to both baselines.

\begin{table}
\caption{Average total cost}
\label{tbl:extra}
\begin{tabular}{|p{0.75\columnwidth}|p{0.15\columnwidth}|}
\hline
\textbf{Algorithm} & \textbf{Total cost} \\
\hline
{Stabilizing-policy-only CALF (\textit{Algorithm \ref{alg:no-lyapunov}})} & 88.0 \\
\hline
Actor-critic without stabilizing constraints & 152.0 \\
\hline
Nominal stabilizing policy & 252.0 \\
\hline
\end{tabular}
\label{tab_results}
\end{table}

\bibliographystyle{IEEEtran}
\bibliography{
bib/AIDA
}

%%%%%%%%%%%%%%%%%%%%%%%%%%%%%%%%%%%%%%%%%%%%%%%%%%%%%%%%%%%%%%%%%%%%%%%%%%%%%%%%%%%%%%%%%%%%
%%%%%%%%%%%%%%%%%%%%%%%%%%%%%%%%%%%%%%%%%%%%%%%%%%%%%%% BIOGRAPHY

\pagebreak

%%%%%%%%%%%%%%%%%%%%%%%%%%%%%%%%%%%%%%%%%%%%%%%%%%%%%%%%%%%%%%%%%%%%%%%%%%%%%%%%%%%%%%%%%%%%
%%%%%%%%%%%%%%%%%%%%%%%%%%%%%%%%%%%%%%%%%%%%%%%%%%%%%%%%%%%%%%%%%%%%%%%%%%%%%%%%%%%%%%%%%%%%
\appendix

%%%%%%%%%%%%%%%%%%%%%%%%%%%%%%%%%%%%%%%%%%%%%%%%%%%%%%%%%%%%%%%%%%%%%%%%%%%%%%%%%%%%%%%%%%%%
%%%%%%%%%%%%%%%%%%%%%%%%%%%%%%%%%%%%%%%%%%%%%%%%%%%%%%%%%%%%%%%%%%%%%%%%%%%%%%%%%%%%%%%%%%%%
\subsection{Preliminaries}
\label{sub:prf-preliminaries}

First note that \eqref{eqn:sys} is in fact equivalent to 
\begin{equation}
\begin{aligned}
	&\diff X_t = f(X_t, U_t)\diff t + \sigma(X_t, U_t)Z'_t \diff t,\\
	&\mathbb{P}\left[X_0 = \text{const}\right] = 1,
\end{aligned}
\end{equation}
where $Z'$ is a measurable bounded random process, such that $\forall t \ \nrm{Z'_t} \leq \lip{Z}$. The latter is motivated by the fact that due to absolute continuity of $Z$, there always exists such $Z'$ that  $Z_t = Z_0 + \int_0^tZ'_\tau\diff \tau$, with the last identity implying $\diff Z_t = Z'_t \diff t$.

Recall that the statements of the presented theorems concern an arbitrary Lyapunov function $L$. If a function $L$ is already known to be a Lyapunov function, then the same applies to $L' := \zeta^{-1} L$, which makes $\zeta L = L'$. Thus without loss of generality let's assume $\zeta = 1$ and denote $w^{\#} := w^{\#}_{1}$.

Once the system's state $X_{k\delta}$ is in $\B_{s^{\ast}} \subset \B_{\nrm{X_0}}$, which is referred to as the \textit{core ball}, the setting of $(U_k,w_k)$ is arbitrary. 
This is dictated by the fact that the optimization problem may become infeasible in a small vicinity of the origin due to noise and S\&H behavior.
Nor is one interested in what happens there as far as S\&H-setting is concerned.

Let $v^\ast = \hatalphalow(r)$. 
At this point, note that for any $w \in \W$,
\begin{align*}
\hatalphalow(\nrm{x}) \leq \hat{J}^w(x) \leq v^\ast \; \implies \; \nrm{x} \leq r
\end{align*}
and also 
\begin{align*}
\hatalphaup(s^{\ast}) \leq \hat{J}^w(x) \leq \hatalphaup(\nrm{x}) \; \implies \; \nrm{x} \geq s^{\ast},
\end{align*}
which relate the value of $\hat{J}$ to the facts that $x \in \B_{r}$ or $s \not \in \B_{s^{\ast}}$, respectively. 
It can be seen that, among other factors to be detailed later, the bounding functions $\hat{\alpha}_{\text{low},\text{up}}$ contribute to the radius of the target ball.

Finally, call an actor-critic sequence ${(U_{k\delta},w_k)}_{k \in \N_0}$ \textit{admissible} if, for any $k \in \N_0$, \eqref{eqn:actor-critic-stab-1}--\eqref{eqn:actor-critic-stab-4} are satisfied along the corresponding trajectory as long as $X_{k\delta} \not \in \ball_{s^{\ast}}$.

A single element of an actor-critic sequence will be called an \emph{actor-critic tuple}.

First, let $S^\ast > \sublyapunov$  and  let $\lip{f}$ be the Lipschitz constant of $f$ on $\B_{S^\ast}$. 

Now, let

\begin{equation}
\label{eqn:bound-drift}
\bar{f} := \sup_{\subalign{\hspace{4pt}x &\in \B_{S^\ast} \\ u &\in \mathbb{U} }} \; \nrm{f(x, u)}.
\end{equation}

\begin{lem}
\label{lem:escape-1}
If for some $t$ it is true that $\nrm{X_t} \leq \sublyapunov$, then $X_{t + \hat{t}} \in \ball_{S^\ast}$ is implied by $\hat{t} < \frac{S^\ast - \sublyapunov}{\bar{f} + \sigma_{\max} \lip{Z}}$.
\end{lem}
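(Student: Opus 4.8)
The plan is to bound the displacement of the trajectory over the time interval $[t, t+\hat t]$ using the equivalent drift-plus-bounded-integrand representation established in Section~\ref{sub:prf-preliminaries}. Recall that \eqref{eqn:sys} can be rewritten as $\diff X_t = f(X_t, U_t)\diff t + \sigma(X_t,U_t)Z'_t\diff t$, where $\nrm{Z'_s}\leq\lip{Z}$ almost surely. Hence, as long as the trajectory remains in $\B_{S^\ast}$, we have the pointwise bound $\nrm{\tfrac{\diff}{\diff s}X_s} \leq \bar f + \sigma_{\max}\lip{Z}$, using the definition \eqref{eqn:bound-drift} of $\bar f$ and the operator-norm bound $\nrm{\sigma(x,u)}_{\text{op}}\leq\sigma_{\max}$.

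First I would set up a maximal-interval / continuity argument to justify that the trajectory does in fact stay in $\B_{S^\ast}$ long enough for the bound to apply: let $\tau := \inf\{s \geq t : \nrm{X_s} \geq S^\ast\}$ be the first exit time from $\B_{S^\ast}$ after $t$. On $[t, \tau)$ the velocity bound above holds, so integrating gives $\nrm{X_s - X_t} \leq (s-t)(\bar f + \sigma_{\max}\lip{Z})$ for all $s \in [t,\tau)$. Combined with the hypothesis $\nrm{X_t} \leq \sublyapunov$ and the triangle inequality, this yields $\nrm{X_s} \leq \sublyapunov + (s-t)(\bar f + \sigma_{\max}\lip{Z})$ on $[t,\tau)$. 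Then I would argue by contradiction that $\tau > t + \hat t$ whenever $\hat t < \frac{S^\ast - \sublyapunov}{\bar f + \sigma_{\max}\lip{Z}}$: if $\tau \leq t+\hat t$, then by continuity $\nrm{X_\tau} = S^\ast$, but the displacement estimate forces $\nrm{X_\tau} \leq \sublyapunov + (\tau - t)(\bar f + \sigma_{\max}\lip{Z}) < \sublyapunov + \hat t(\bar f + \sigma_{\max}\lip{Z}) < S^\ast$, a contradiction. Hence the trajectory never reaches the boundary of $\B_{S^\ast}$ on $[t, t+\hat t]$, so in particular $X_{t+\hat t} \in \ball_{S^\ast}$.

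The main (and essentially only) obstacle is the bookkeeping around the Lipschitz constant $\bar f$: it is defined only over $\B_{S^\ast}$, so the velocity bound is valid only while the trajectory is inside that ball, which is exactly why the first-exit-time $\tau$ device is needed rather than a naive global Gronwall-type estimate. One should also note that $\sublyapunov < S^\ast$ by the standing hypothesis $S^\ast > \sublyapunov$ introduced just before the lemma, so the right-hand side of the $\hat t$ bound is a genuinely positive quantity and the statement is non-vacuous. Everything else is a routine application of absolute continuity of the trajectory (guaranteed by the Lipschitz continuity of $Z$ and local Lipschitzness of $f$, $\sigma$) and the fundamental theorem of calculus.
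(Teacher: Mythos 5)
Your proposal is correct and follows essentially the same route as the paper's own proof: the paper also bounds the displacement by $(\bar f + \sigma_{\max}\lip{Z})\hat t$ using the drift-plus-bounded-noise representation and argues via the first time the trajectory reaches $\nrm{X}=S^\ast$ that such an escape would require $\hat t \ge \frac{S^\ast - \sublyapunov}{\bar f + \sigma_{\max}\lip{Z}}$. Your write-up merely makes explicit the first-exit-time bookkeeping that the paper's one-line argument leaves implicit.
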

\begin{proof}
Let's assume that by $t + \hat{t}$, $\nrm{X_{t + \hat{t}}}$ has reached $S^\ast$ for the first time. Then, obviously
$$
S^\ast - \sublyapunov \leq \nrm{X_{t + \hat{t}}} - \nrm{X_t} \leq (\bar{f} + \sigma_{\max} \lip{Z})\hat{t}. 
$$
This in turn means that $\hat{t}$ would have to be at least as large as $\frac{S^\ast - \sublyapunov}{\bar{f} + \sigma_{\max} \lip{Z}}$ for the agent to be able to transcend $S^\ast$ by that time.

\end{proof}

%% !!! Note that we do not actually account for what happens after s^{\ast} is reached. In theory it can lead to \nrm{X} escaping B_{S^\ast}, thus invalidating the bounds. Therefore all of our derivations are based upon the assumption that s^{\ast} has not yet been reached. (nor will it be reached in the course of the current step)
\begin{lem}
\label{lem:bound-transition}
Let the following constraints be in place:
\begin{multline}
G^{\bar \nu}_\eps(w_k, w_{k-1}, \Lambda_{U_{k\delta}}(X_{k\delta}), X_{k\delta}) \leq 0,\\
G^{\bar \nu}_\eps(w_{k + 1}, w_{k}, \Lambda_{U_{(k + 1)\delta}}(X_{(k + 1)\delta}), X_{(k + 1)\delta}) \leq 0,\\
\end{multline}
then if $\delta$ is sufficiently small and $\eps < \bar \nu - (1 + \frac{1}{3})\lip{\hat{J}}\sigma_{\max}\lip{Z}$, the following implication holds:
\begin{equation}
\begin{aligned}
\hat{J}^{w_{k}}(X_{k\delta}) \leq \alpha_{\text{up}}(\nrm{X_{0}})
\implies\\
\hat{J}^{w_{k + 1}}(X_{(k + 1)\delta}) \leq \alpha_{\text{up}}(\nrm{X_{0}})
\end{aligned}
\end{equation}
\end{lem}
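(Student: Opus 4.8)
The plan is to track how the value $\hat J^{w_k}(X_{k\delta})$ evolves over one sampling step and show it cannot cross the level $\alpha_{\text{up}}(\nrm{X_0})$. First I would decompose the increment $\hat J^{w_{k+1}}(X_{(k+1)\delta}) - \hat J^{w_k}(X_{k\delta})$ through the Euler prediction point $\Lambda_{U_{k\delta}}(X_{k\delta})$ and the weight change, writing it as a telescoping sum of three pieces: the ``policy improvement'' along the prediction $\hat J^{w_k}(\Lambda_{U_{k\delta}}(X_{k\delta})) - \hat J^{w_k}(X_{k\delta})$, the prediction error $\hat J^{w_k}(X_{(k+1)\delta}) - \hat J^{w_k}(\Lambda_{U_{k\delta}}(X_{k\delta}))$ coming from the difference between the true noisy transition and the noiseless Euler step, and the weight-update term $\hat J^{w_{k+1}}(X_{(k+1)\delta}) - \hat J^{w_k}(X_{(k+1)\delta})$.

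For the first piece I would invoke constraint \eqref{eqn:actor-critic-stab-3}, which gives $\hat J^{w_k}(\Lambda_{U_{k\delta}}(X_{k\delta})) - \hat J^{w_k}(X_{k\delta}) \leq -\delta\bar\nu$. For the second piece, I would bound $\nrm{X_{(k+1)\delta} - \Lambda_{U_{k\delta}}(X_{k\delta})}$ using the representation $\diff X_t = f\diff t + \sigma Z'_t\diff t$ from the appendix preliminaries: the deterministic drift contributes an $O(\delta^2)$ Euler truncation error (controlled by $\lip{f}$ and $\bar f$ on $\B_{S^\ast}$), while the noise contributes at most $\delta\sigma_{\max}\lip{Z}$ plus higher-order terms; multiplying by $\lip{\hat J}$ and collecting the lower-order noise term gives roughly $\delta\lip{\hat J}\sigma_{\max}\lip{Z}$ up to the $\tfrac13$-type slack, which is exactly why the factor $(1+\tfrac13)$ and ``sufficiently small $\delta$'' appear. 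For the third piece I would use \eqref{eqn:actor-critic-stab-1} applied at step $k+1$ with previous weights $w_k$ evaluated at the \emph{current} state $X_{(k+1)\delta}$, which bounds it by $\delta\eps$. Adding the three bounds yields an increment of at most $\delta(\eps - \bar\nu + (1+\tfrac13)\lip{\hat J}\sigma_{\max}\lip{Z}) < 0$ by the hypothesis on $\eps$, so $\hat J^{w_{k+1}}(X_{(k+1)\delta}) \leq \hat J^{w_k}(X_{k\delta}) \leq \alpha_{\text{up}}(\nrm{X_0})$ and the implication follows.

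The main obstacle I anticipate is bookkeeping the second piece carefully: the constraints \eqref{eqn:actor-critic-stab} are stated at the predicted point $\Lambda_{U_{k\delta}}(X_{k\delta})$ rather than at the realized next state $X_{(k+1)\delta}$, so one must cleanly separate the deterministic Euler error (which is genuinely $O(\delta^2)$ and absorbable for small $\delta$) from the irreducible $O(\delta)$ noise term, and make sure the Lipschitz constant $\lip{\hat J}$ is the one taken over $\B_{S^\ast}$ — which requires knowing a priori that all the relevant points ($X_{k\delta}$, the prediction, and $X_{(k+1)\delta}$) stay in $\B_{S^\ast}$. The latter is where constraint \eqref{eqn:actor-critic-stab-4} together with the choice $S^\ast > \sublyapunov$ enters: $\hat J^{w_k}(X_{k\delta}) \leq \alpha_{\text{up}}(\nrm{X_0})$ forces $\nrm{X_{k\delta}}$ to lie below a radius strictly inside $\B_{S^\ast}$, and for $\delta$ small enough the one-step excursion cannot leave $\B_{S^\ast}$. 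A secondary subtlety is that \eqref{eqn:actor-critic-stab-1} in the constraint function is written against $\hat J^{w_{\text{prev}}}$ evaluated at the same argument $x$; I would need to confirm the indexing so that the term I want to bound is literally the one the constraint controls.
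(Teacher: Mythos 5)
Your proposal matches the paper's own proof essentially step for step: the same chain using (C1) at step $k+1$, the Lipschitz bound $\lip{\hat J}\bigl(\lip{f}(\bar f + \sigma_{\max}\lip{Z})\delta^2 + \sigma_{\max}\lip{Z}\delta\bigr)$ on the mismatch between $X_{(k+1)\delta}$ and the Euler prediction, the decay constraint (C3), and the containment argument via (C4) and the escape-time lemma that keeps the trajectory in $\ball_{S^\ast}$ so those constants apply. No gaps; the argument is correct and takes the same route as the paper.
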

\begin{proof}
First, observe that $\hat{J}^{w_{k}}(X_{k\delta}) \leq \alpha_{\text{up}}(\nrm{X_{0}})$ implies
\begin{multline}
 \hat{\alpha}_{\text{low}}(\nrm{X_{k\delta}}) \leq \hat{J}^{w_{k}}(X_{k\delta}) \leq \alpha_{\text{up}}(\nrm{X_{0}}),\\
 \nrm{X_{k\delta}} \leq \hat{\alpha}^{-1}_{\text{low}}(\alpha_{\text{up}}(\nrm{X_{0}})).
\end{multline}
Together with \textit{Lemma \ref{lem:escape-1}} the above yields
\begin{equation}
\label{eqn:no-escape}
\forall t\in[k\delta, (k + 1)\delta] \ X_{t} \in \ball_{S^\ast}.
\end{equation}
The latter justifies the usage of earlier mentioned bounds and Lipschitz constants $\bar f, \lip{\hat{J}}, \lip{f}$ that assume $X_t \in \ball_{S^\ast}$. This enables us to derive
\begin{multline}
\nrm{X_{(k + 1)\delta} - \Lambda_{U_{k\delta}}(X_{k\delta})} \leq
\\ \leq \, \nrm{ \int_{k \delta}^{(k+1) \delta} f({X_t},U_{k\delta}) \, \text{d}t - f(X_{k\delta},U_{k\delta}) \delta  }  + \\  \hspace{0.8em} 
\nrm{ \int_{k \delta}^{(k+1) \delta} \sigma(X_t, U_{k\delta})Z'_t \, \text{d}t}\\
\leq \, \nrm{  \int_{k \delta}^{(k+1) \delta} f({X_t},U_{k\delta})  - f(X_{k\delta},U_{k\delta}) \, \text{d}t  } + \\  \hspace{0.8em} 
\nrm{ \int_{k \delta}^{(k+1) \delta} \sigma(X_t, U_{k\delta})Z'_t \, \text{d}t} \\
\leq \, \int_{k \delta}^{(k+1) \delta} \nrm{f({X_t},U_{k\delta})  - f(X_{k\delta},U_{k\delta})} \, \text{d}t  + \\  \hspace{0.8em} 
\int_{k \delta}^{(k+1) \delta} \nrm{\sigma(X_t, U_{k\delta})Z'_t} \, \text{d}t \\ 
\leq \,  \int_{k \delta}^{(k+1) \delta} \lip{f} (\bar{f} + \sigma_{\max}\lip{Z}) \delta \, \text{d}t  + \\  \hspace{0.8em} 
\int_{k \delta}^{(k+1) \delta} \sigma_{\max}\lip{Z} \, \text{d}t \\
= \,  \underbrace{\lip{f} (\bar{f} + \sigma_{\max}\lip{Z}) \delta^2 + \sigma_{\max}\lip{Z}\delta}\limits_{\chi_1(\delta) := }.
\end{multline}
Due to $G^{\bar \nu}_\eps(\dots) \leq 0$ the latter implies

\begin{multline}
\label{eqn:critic-bound-1}
\hat{J}^{w_{k + 1}}(X_{(k + 1)\delta}) \leq \hat{J}^{w_{k}}(X_{(k + 1)\delta})  + \varepsilon_{1} \leq \\ 
\hat{J}^{w_{k}}(\Lambda_{U_{k\delta}}(X_{k\delta})) + \lip{\hat{J}}\chi_{1}(\delta) + \varepsilon_{1} \leq \\
\hat{J}^{w_{k}}(X_{k\delta}) - \delta{\bar \nu} + \eps +  \lip{\hat{J}}\chi_{1}(\delta) \leq  \\
\alpha_{\text{up}}(\nrm{X_{0}}) - \delta{\bar \nu} + \eps  + \lip{\hat{J}}\chi_{1}(\delta) \leq \\
\alpha_{\text{up}}(\nrm{X_{0}}) + \eps +  \delta(\lip{\hat{J}}\sigma_{\max}\lip{Z} - {\bar \nu}) + \\ \delta^2\lip{\hat{J}}\lip{f}(\bar f + \sigma_{\max}\lip{Z}).
\end{multline}
Now, pick $\delta$ to be sufficiently small:
\begin{multline}
\label{eqn:sufficiently-small-1}
\delta < \frac{{\bar \nu}}{2\lip{\hat{J}}\lip{f}(\bar f + \sigma_{\max}\lip{Z})}.\\
\end{multline}
Substituting \eqref{eqn:sufficiently-small-1} into \eqref{eqn:critic-bound-1} yields
\begin{multline}
\hat{J}^{w_{k + 1}}(X_{(k + 1)\delta}) \leq \alpha_{\text{up}}(\nrm{X_{0}}).\\
\end{multline}
\end{proof}

\begin{lem}
Let $K \in \mathbb{N}$ and $w_0 = w^\#$. If ${\bar \nu} > 2\lip{\hat{J}}\sigma_{\max}\lip{Z}+2\frac{\eps}{\delta}$, then for a sufficiently small $\delta$ the following implication holds:
\begin{equation}
\label{eqn:no-escape-2}
\begin{aligned}
&\forall k \leq K \ \ G^{\bar \nu}_\eps(w_k, w_{k-1}, \Lambda_{U_{k\delta}}(X_{k\delta}), X_{k\delta}) \leq 0 \implies\\
&\forall t \in [0, (K + 1)\delta] \ X_t \in \ball_{S^\ast}.
\end{aligned}
\end{equation}
\end{lem}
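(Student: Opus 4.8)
The plan is to first control the critic values at the sampling instants, then convert that into a uniform bound on the sampled states $X_{k\delta}$, and finally let Lemma~\ref{lem:escape-1} take care of the continuous-time excursions between samples. The first step is an induction on $k$ establishing $\hat{J}^{w_k}(X_{k\delta}) \leq \alphaup(\nrm{X_0})$ for every $k \in \{0,\dots,K\}$. For the base case, since $w_0 = w^{\#}$ and we may take $\zeta = 1$ without loss of generality (as fixed in the preliminaries), Assumption~\ref{asm:struct-equiv} gives $\hat{J}^{w_0}(X_0) = L(X_0)$, and Assumption~\ref{asm:CLF}\,(ii) gives $L(X_0) \leq \alphaup(\nrm{X_0})$. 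For the inductive step, assume the bound at some $k < K$; the hypothesis then supplies both $G^{\bar\nu}_\eps(w_k, w_{k-1}, \Lambda_{U_{k\delta}}(X_{k\delta}), X_{k\delta}) \leq 0$ and $G^{\bar\nu}_\eps(w_{k+1}, w_k, \Lambda_{U_{(k+1)\delta}}(X_{(k+1)\delta}), X_{(k+1)\delta}) \leq 0$ (both indices lie in $\{0,\dots,K\}$), and for $\delta \leq 2$ the relation $\bar\nu > 2\lip{\hat{J}}\sigma_{\max}\lip{Z} + 2\frac{\eps}{\delta}$ forces $\eps < \bar\nu - 2\lip{\hat{J}}\sigma_{\max}\lip{Z} < \bar\nu - (1+\frac{1}{3})\lip{\hat{J}}\sigma_{\max}\lip{Z}$, so the hypotheses of Lemma~\ref{lem:bound-transition} are met and it propagates the bound to $k+1$, provided $\delta$ also obeys \eqref{eqn:sufficiently-small-1}.

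With the critic bound in hand, I would invoke the left inequality in \eqref{eqn:actor-critic-stab-4} — which holds at every $k \leq K$ by hypothesis — to get $\hatalphalow(\nrm{X_{k\delta}}) \leq \hat{J}^{w_k}(X_{k\delta}) \leq \alphaup(\nrm{X_0})$, whence, applying $\hatalphalow^{-1}$, $\nrm{X_{k\delta}} \leq \sublyapunov$ for all $k \leq K$ (recall $S^{\ast} > \sublyapunov$ from the preliminaries). Now, for each $k \in \{0,\dots,K\}$, Lemma~\ref{lem:escape-1} applied at $t = k\delta$ gives $X_{k\delta + \hat t} \in \ball_{S^{\ast}}$ for every $\hat t < \frac{S^{\ast} - \sublyapunov}{\bar f + \sigma_{\max}\lip{Z}}$; taking $\delta$ strictly below this quantity, every $\tau \in [k\delta,(k+1)\delta]$ has $\tau - k\delta \leq \delta$ below the threshold, hence $X_\tau \in \ball_{S^{\ast}}$. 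The union over $k = 0,\dots,K$ of the intervals $[k\delta,(k+1)\delta]$ equals $[0,(K+1)\delta]$, so the conclusion follows. In particular, ``sufficiently small $\delta$'' here means $\delta$ below the minimum of the threshold \eqref{eqn:sufficiently-small-1}, of $\frac{S^{\ast} - \sublyapunov}{\bar f + \sigma_{\max}\lip{Z}}$, and of $2$.

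The analytic content is already carried by Lemmas~\ref{lem:escape-1} and~\ref{lem:bound-transition}, so what remains is mostly bookkeeping; the one place that needs genuine attention is reconciling the standing relation between $\bar\nu$, $\eps$ and $\delta$ with what Lemma~\ref{lem:bound-transition} actually demands — namely checking that $\bar\nu > 2\lip{\hat{J}}\sigma_{\max}\lip{Z} + 2\frac{\eps}{\delta}$ entails $\eps < \bar\nu - (1+\frac{1}{3})\lip{\hat{J}}\sigma_{\max}\lip{Z}$ once $\delta \leq 2$ — and making sure this is compatible with the other upper bounds imposed on $\delta$. A minor point is that the right endpoint $t = (K+1)\delta$ is attained from the sampling instant $t = K\delta$ through Lemma~\ref{lem:escape-1}, so no constraint at step $K+1$ is needed, consistently with the hypothesis ranging only over $k \leq K$.
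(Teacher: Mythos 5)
Your proof is correct and follows essentially the same route as the paper's: induction with base $\hat J^{w_0}(X_0)=L(X_0)\le\alphaup(\nrm{X_0})$ (from $w_0=w^{\#}$) and Lemma~\ref{lem:bound-transition} as the step, then constraint \eqref{eqn:actor-critic-stab-4} plus Lemma~\ref{lem:escape-1} to cover the inter-sample intervals — exactly what the paper compresses into its citation of \eqref{eqn:no-escape}. Your explicit reconciliation of the hypothesis $\bar\nu > 2\lip{\hat J}\sigma_{\max}\lip{Z}+2\tfrac{\eps}{\delta}$ with the condition $\eps < \bar\nu-(1+\tfrac13)\lip{\hat J}\sigma_{\max}\lip{Z}$ required by Lemma~\ref{lem:bound-transition} is a detail the paper leaves implicit, and it checks out.
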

\begin{proof}
Note, that
\begin{multline}
\label{eqn:base-1}
\hat{J}^{w_{0}}(X_{0}) = L(X_{0}) \leq \alpha_{\text{up}}(\nrm{X_{0}}).\\
\end{multline}
By performing mathematical induction with \eqref{eqn:base-1} as its base and \textit{Lemma \ref{lem:bound-transition}} as its step we obtain:
\begin{multline}
\forall k \leq K \ \hat{J}^{w_{k}}(X_{k\delta}) \leq \alpha_{\text{up}}(\nrm{X_{0}}).\\
\end{multline}
Together with \eqref{eqn:no-escape} the above yields \eqref{eqn:no-escape-2}.
\end{proof}

The latter lemma justifies the choice of Lipschitz constants and bounds such as for instance \eqref{eqn:bound-drift}. The said bounds will have been made heavy use of in the proofs of the presented theorems.

Next, notice that the right-hand side of \eqref{eqn:sys} is continuous in $x$ in each $k$th sampling step.
Noticing that both $f$ and $\sigma$ are Lipschitz continuous in the first argument furnishes the existence of local (path-wise) solutions.
Combining this with the fact that $S^\ast$ is not exceeded we conclude existence of global solutions.

%%%%%%%%%%%%%%%%%%%%%%%%%%%%%%%%%%%%%%%%%%%%%%%%%%%%%%%%%%%%%%%%%%%%%%%%%%%%%%%%%%%%%%%%%%%%
%%%%%%%%%%%%%%%%%%%%%%%%%%%%%%%%%%%%%%%%%%%%%%%%%%%%%%%%%%%%%%%%%%%%%%%%%%%%%%%%%%%%%%%%%%%%
\subsection{Proof of Theorem \ref{thm:stab}}
\label{sub:prf-stability}

Let the following hold:
\begin{multline}
	\forall k \in \mathbb{N}_{0}  \ G^{{\bar \nu}}_\eps(w_k, w_{k-1}, \Lambda_{U_{k\delta}}(X_{k\delta}), X_{k\delta}) \leq 0 \\ \lor \ X_{k\delta} \in \ball_{s^{\ast}}. \\
\end{multline}

If one assumes \eqref{eqn:sufficiently-small-1} together with $\eps < \bar \nu - (1 + \frac{1}{3})\lip{\hat{J}}\sigma_{\max}\lip{Z}$, then for $X_{k\delta} \in \ball_{S^{\ast}}\setminus\ball_{s^{\ast}}$ we get
\begin{align} \label{subeq:decay-Jhat-min}
\begin{split}
& \exists \Delta < 0 \ \hat{J}^{w_{k+1}}(X_{(k+1)\delta}) - \hat{J}^{w_k}(X_{k\delta}) \leq \\
& \leq \; -\delta \bar{\nu} + \lip{\hat{J}} \lip{f} (\bar{f} + \sigma_{\max}\lip{Z} ) \delta^2 + \\ &\hspace{1.2em} +  \lip{\hat{J}}\sigma_{\max}\lip{Z}\delta + \eps < \Delta.
\end{split}
\end{align}

Now if we assume the opposite of \eqref{eqn:implication-in-thm1} that leads to
\begin{multline}
\forall k \in \mathbb{N}_{0} \ \hat{J}^{w_{k}}(X_{k\delta}) < \hat{J}^{w_0}(X_{0}) + k\Delta,\\
\implies \exists k \in \mathbb{N}_{0} \ \hat{J}^{w_{k}}(X_{k\delta}) < 0.
\end{multline}
which is in turn a contradiction, since $J^w$ is non-negative.

Thus, it can be concluded that with a sufficiently small $\delta > 0$, the state of the system will be driven to $\ball_{s^{\ast}}$. 
The time it takes to reach $\ball_{s^{\ast}}$ can be determined in a uniform way from the decay rate and the value of $\hat{J}$ \cite{Clarke1997,Osinenko2020a,Schmidt2021,Osinenko2018a}. 

%%%%%%%%%%%%%%%%%%%%%%%%%%%%%%%%%%%%%%%%%%%%%%%%%%%%%%%%%%%%%%%%%%%%%%%%%%%%%%%%%%%%%%%%%%%%
%%%%%%%%%%%%%%%%%%%%%%%%%%%%%%%%%%%%%%%%%%%%%%%%%%%%%%%%%%%%%%%%%%%%%%%%%%%%%%%%%%%%%%%%%%%%
\subsection{Proof of Theorem \ref{thm:feas}}
\label{sub:prf-feasibility}

Let  $\bar \nu := 4 \lip{J}\sigma_{\max}\lip{Z}$.

Consider a current state $X_{k\delta} \in \B_{S^\ast} \setminus \B_{s^{\ast}}$ at some time step $k \in \N_0$.
If $\delta \leq \bar \delta$ and $U_{k\delta} := \eta(X_{k\delta})$, then by \textit{Assumption} \ref{asm:CLF}-\ref{asm:sample-wise-decay}, it holds that
\begin{enumerate}
	\item[i)] $\hat{J}^{w^{\#}}(X_{(k + 1)\delta}) - \hat{J}^{w^{\#}}(X_{k\delta}) \leq - \delta \nu(X_{k\delta})$,
	\item[ii)] $ \alphalow(\|X_{k\delta}\|) \leq \hat{J}^{w^{\#}}(X_{k\delta}) \leq \alphaup(\|X_{k\delta}\|)$.
\end{enumerate}
Given these properties, it needs to be shown that \eqref{eqn:actor-critic-stab-1}--\eqref{eqn:actor-critic-stab-4} are well posed to mean that these constraints are feasible for all times where $X_{k\delta} \not\in \B_{s^{\ast}}$. 

First let's ensure that $\delta > 0$ is sufficiently small and select and adequate $\eps$.
\begin{multline}
\delta < \frac{\sigma_{\max}\lip{Z}}{3\lip{f}(\bar f + \sigma_{\max}\lip{Z})},\\
\eps := (2 + \frac{2}{3})\lip{J}\sigma_{\max}\lip{Z} = \frac{\bar \nu}{2} + \frac{\bar \nu}{6}.
\end{multline}
Note, that the latter implies
\begin{multline}
 \lip{\hat{J}} \lip{f} (\bar{f} + \sigma_{\max}\lip{Z})\delta < \frac{\lip{\hat{J}}\sigma_{\max}\lip{Z}}{3} < \frac{\bar \nu}{12}.\\
\end{multline}

From i), it follows that

\begin{align*}
&\hat{J}^{w^{\#}}(\Lambda_{U_{k\delta}}(X_{k\delta})) - \hat{J}^{w^{\#}}(X_{k\delta}) \\ &\leq - \delta \nu(X_{k\delta}) +\\ &\hspace{1.2em}   \lip{\hat{J}} \lip{f} (\bar{f} + \sigma_{\max}\lip{Z} ) \delta^2 + \lip{\hat{J}}\sigma_{\max}\lip{Z}\delta\\
&\leq \, - \delta (\nu(X_{k\delta}) - \frac{\bar \nu}{12} - \frac{\bar \nu}{4}) \leq -\delta \bar \nu.
\end{align*}

Hence, (C3) is satisfied.

%	\red{Leave some interval between the bounds}
Now, note that
\begin{multline}
\label{eqn:JeqL}
\hat J^{w^\#}(\cdot) := L(\cdot), \\
\end{multline}
which directly implies that (C2) is satisfied.

Also, \eqref{eqn:JeqL} implies
\begin{multline}
\alpha_{\text{low}}(\nrm{X_{k\delta}}) \leq \hat J^{w^{\#}}(X_{k\delta}) \leq \alpha_{\text{up}}(\nrm{X_{k\delta}}) \implies\\
\hat \alpha_{\text{low}}(\nrm{X_{k\delta}}) \leq \hat J^{w^{\#}}(X_{k\delta}) \leq \hat \alpha_{\text{up}}(\nrm{X_{k\delta}}).\\ 
\end{multline}
The latter indicates that (C4) is satisfied.

Observe that
\begin{multline}
\label{eqn:c1-backward}
J^{w^{k - 1}}(X_{k\delta}) \geq J^{w^{k - 1}}(\Lambda_{U_{(k - 1)\delta}}(X_{(k - 1)\delta})) - \\ 
\lip{J}\nrm{X_{k\delta} -  \Lambda_{U_{(k - 1)\delta}}(X_{(k - 1)\delta})} \geq \\
 L(\Lambda_{U_{(k - 1)\delta}}(X_{(k - 1)\delta})) -  
 \lip{\hat{J}} \lip{f} (\bar{f} + \sigma_{\max}\lip{Z} ) \delta^2 - \\ \lip{\hat{J}}\sigma_{\max}\lip{Z}\delta,
\end{multline}
while at the same time
\begin{multline}
\label{eqn:c1-premise}
\hat J^{w^{\#}}(X_{k\delta}) - L(\Lambda_{U_{(k - 1)\delta}}(X_{(k - 1)\delta})) + \\ \lip{\hat{J}} \lip{f} (\bar{f} + \sigma_{\max}\lip{Z} ) \delta^2 + \lip{\hat{J}}\sigma_{\max}\lip{Z}\delta \leq \\ 2\lip{\hat{J}} \lip{f} (\bar{f} + \sigma_{\max}\lip{Z} ) \delta^2 + 2\lip{\hat{J}}\sigma_{\max}\lip{Z}\delta \leq \\ \frac{\bar \nu}{2}\delta + \frac{\bar \nu}{6}\delta = \eps\delta.
\end{multline}
Substituting \eqref{eqn:c1-premise} into \eqref{eqn:c1-backward} yields
\begin{multline}
J^{w^{\#}}(X_{k\delta}) - J^{w^{k - 1}}(X_{k\delta}) \leq \eps\delta,\\
\end{multline}
which is equivalent to (C1). 
 
Hence, feasibility of CALF is shown.

%%%%%%%%%%%%%%%%%%%%%%%%%%%%%%%%%%%%%%%%%%%%%%%%%%%%%%%%%%%%%%%%%%%%%%%%%%%%%%%%%%%%%%%%%%%%
%%%%%%%%%%%%%%%%%%%%%%%%%%%%%%%%%%%%%%%%%%%%%%%%%%%%%%%%%%%%%%%%%%%%%%%%%%%%%%%%%%%%%%%%%%%%
\subsection{Proof of Theorem \ref{thm:CALF-2}}
\label{sub:CALF-2}
%Set  $\bar \nu := 4 \lip{J}\sigma_{\max}\lip{Z}$.

The proof of the theorem technically follows that of Theorem \ref{thm:stab}.
The difference is in invocations of the stabilizing policy.

Consider Algorithm \ref{alg:no-lyapunov}.
Let us denote, for brevity:
\begin{equation}
	\label{eqn:CALF2_notation1}
	\begin{aligned}
		& x^\circ := x_{\text{prev}}, \\
		& \hat J^\circ := \hat J^{w_{\text{prev}}}(x^\circ). \\
	\end{aligned}
\end{equation}

Also, let us denote:
\begin{equation}
	\label{eqn:CALF2_notation2}
	\begin{aligned}
		& \hat{\mathbb K} := \{k \in \N_0 : \text{critic finds a solution } w^* \}, \\
		& \mathbb K_L := \{k \in \N_0 : \eta \text{ is invoked} \}, \\
	\end{aligned}
\end{equation}
the sets of time step indices where the critic succeeds to find a solution $w^*$ and, respectively, where it does not and so the stabilizing policy $\eta$ is invoked.

Now, define:
\begin{equation*}
%	\label{eqn:CALF2_notation1}
	\begin{aligned}
		& x^\circ_k := 	\begin{cases}
							x_k, k \in \hat{\mathbb K},\\
							x^\circ_{k-1}, k \in \mathbb K_L,
						\end{cases}
	\end{aligned}
\end{equation*}
and, by the same token, $\hat J^\circ_k$.

Notice that, although the user of Algorithm \ref{alg:no-lyapunov} might not be aware of a Lyapunov function associated with a known stabilizing policy, such a Lyapunov function always exists \cite{Clarke2011}.
By Assumption \ref{asm:sample-wise-decay}, $L$ is a Lyapunov function associated with $\eta$ and $\nu$ is the corresponding sample-wise decay rate.
Notice realizations of CALF as per Algorithms \ref{alg:CALF_AC}, \ref{alg:CALF_SARSA} require knowledge of such an $L$, whereas Algorithm \ref{alg:no-lyapunov} only assumes its existence which is justified by converse theorems \cite{Clarke2011}.

Let $L_k$ denote $L(X_{k \delta})$.
When running Algorithm \ref{alg:no-lyapunov}, we expect the behavior of $L_k$ and $\hat J^\circ_k$ to look schematically like Figure \ref{fig_CALF2_effLF_schematic} depicts.

\begin{figure}[h]
\centering
\includegraphics[width=0.9\columnwidth]{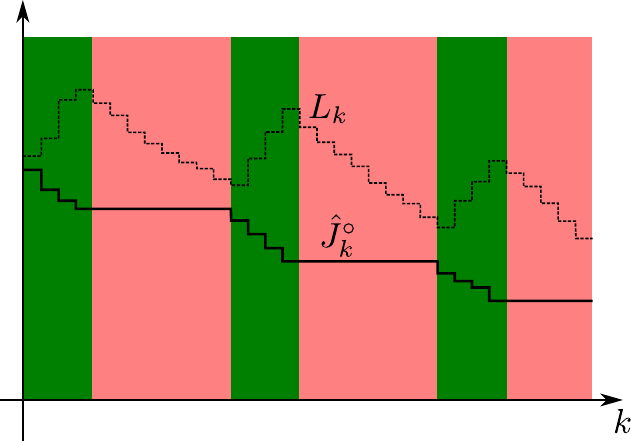}
\caption{Schematic of the critic and a Lyapunov function behavior under Algorithm \ref{alg:no-lyapunov}.}
\label{fig_CALF2_effLF_schematic}
\end{figure}

%Let's assume that the system's state never reaches $\ball_{s^{\ast}}$.

It holds that
\[
	\forall k \in \N_0 \spc \nrm{x^\circ_k} \le \hatalphalow\inv(\hat J^\circ_k) \le \hatalphalow\inv(\hat J^\circ_0).
\]

Observe that
\[
	\begin{aligned}
	& \forall k \in \N_0 \\
	& \qquad L(x^\circ_k) \le \alphaup(\nrm{x^\circ_k}) \le \alphaup( \hatalphalow\inv(\hat J^\circ_k) ) \le \alphaup( \hatalphalow\inv(\hat J^\circ_0) ).
	\end{aligned}
\]

Since for all $k \in \mathbb K_L$, $L_k$ is non-increasing, we conclude that
\[
	\begin{aligned}
	& \forall k \in \N_0 \spc L_k \le \alphaup( \hatalphalow\inv(\hat J^\circ_0) ),
	\end{aligned}
\]
whence
\[
	\begin{aligned}
	& \forall k \in \N_0 \spc \nrm{X_{k\delta}} \le \alphalow\inv ( \alphaup( \hatalphalow\inv(\hat J^\circ_0) ) ),
	\end{aligned}
\]

%First notice that whenever at some time step $k\delta$ a successful critic update occurs, we have:
%$\nrm{X_{k\delta}} \leq \hat \alpha_{\text{low}}^{-1}(J^{w_{k}}(X_{k\delta})) \leq \hat \alpha_{\text{low}}^{-1}(J^{w_{0}}(X_{0}))$. 
%This bound can thus only be exceeded, when the nominal policy is invoked, 
Thus, an overshoot bound $S^{\ast}$ may be set so as to satisfy:
\begin{multline}
S^{\ast} > \alphalow\inv ( \alphaup( \hatalphalow\inv(\hat J^\circ_0) ) ).\\
\end{multline}

Now, if $X_{k\delta} \notin \ball_{s^{\ast}}$, then
\begin{align}
	\nu(X_{k\delta}) > 6 \lip{J}\sigma_{\max}\lip{Z}.
\end{align}

We have, denoting $\Delta \bullet_k \triangleq \bullet_{k+1} - \bullet_k$:
\begin{equation}
	\label{eqn:CALF2_decays}
	\begin{aligned}
		& \forall k \in \hat{\mathbb K} \spc \Delta \hat J^\circ_k \le -6 \lip{J}\sigma_{\max}\lip{Z}\delta, \\
		& \forall k \in \mathbb K_L \spc \Delta L_k \le -6 \lip{J}\sigma_{\max}\lip{Z}\delta. \\		
	\end{aligned}
\end{equation}

If the critic always succeeded, $\ball_{s^{\ast}}$ would be reached after no more than
\begin{equation}
	\label{eqn:CALF2_critic_reaching}
		\hat T := \frac{\hat J^\circ_0 - \hatalphalow(s^*)}{6 \lip{J}\sigma_{\max}\lip{Z}\delta}
\end{equation}
steps.
If the stabilizing policy were always invoked instead, $\ball_{s^{\ast}}$ would be reached after no more than
\begin{equation*}
%	\label{eqn:CALF2_LF_reaching}
		T_L := \frac{L_0 - \alphalow(s^*)}{6 \lip{J}\sigma_{\max}\lip{Z}\delta}
\end{equation*}
steps.

But, unlike $\hat J^\circ_0$, the Lyapunov function $L_k$ may step-wise increase when $k \in \hat{\mathbb K}$ (see Figure \ref{fig_CALF2_effLF_schematic}).
However, $L_k$ never exceeds $\alphaup( \hatalphalow\inv(\hat J^\circ_0) )$ as shown above.
Thus, let us define 
\begin{equation}
	\label{eqn:CALF2_LF_reaching}
		T_L := \frac{\alphaup( \hatalphalow\inv(\hat J^\circ_0) ) - \alphalow(s^*)}{6 \lip{J}\sigma_{\max}\lip{Z}\delta}.
\end{equation}
This is a reaching time of $\ball_{s^{\ast}}$ if only $\eta$ were invoked after a worst growth of $L_k$ after a successful critic update.
We say ``a reaching time" to stress that it is effectively just a bound.
Now, let $T^* := \max\{\hat T, T_L\}$.

We now argue what can happen till the ball $\ball_{s^{\ast}}$ is reached.
The two limiting cases, when only the critic succeeds or $\eta$ is involved, are clear from the reaching time $T^*$.
That is, $\ball_{s^{\ast}}$ is reached within $T^*$ steps.
We need thus to argue about the mixed case.
It is easy to see that the worst case is when per every critic success, there is $T^*-1$ invocations of $\eta$ until $\ball_{s^{\ast}}$ is ``almost" reached, but not quite, followed by another critic success and so on.
Thus, the overall reaching time of $\ball_{s^{\ast}}$ under Algorithm \ref{alg:no-lyapunov} is
\begin{equation}
	\label{eqn:CALF2_reaching}
	T^*(T^*-1).
\end{equation}

%Now, let's consider two alternative cases:
%\begin{enumerate}
%\item  There was an infinite number of successful critic updates.
%\item There was a finite number of successful critic updates.
%\end{enumerate} 

%If 1) holds, then at some point the value of $\hat J$ will turn negative, which violates the constraints implied by 1), thus we have a contradiction.

%Now, let's assume that 2) holds. That means that from some point onwards ($N$) there will be an infinite number of consecutive applications of the stabilizing policy.

%Since $X_{k\delta} \notin \ball_{s^{\ast}}$, we have
%\begin{multline}
%\nu(X_{k\delta}) > 6 \lip{J}\sigma_{\max}\lip{Z}.\\
%\end{multline}

%Thus we obtain:
%\begin{multline}
%\forall k \geq N \ L(X_{(k + 1)\delta}) - L(X_{k\delta}) \leq -6 \lip{J}\sigma_{\max}\lip{Z}\delta. \\
%\end{multline}
%The latter trivially implies that $L$ eventually turns negative.
%
%Thus the fact that \textit{Algorithm \ref{alg:no-lyapunov}} eventually drives the state to $\ball_{s^{\ast}}$ is proved by contradiction.
%
%The above naturally yields the following bound on reaching time:
%\begin{multline}
%T_{\text{reaching}} \leq \Biggl\lceil \frac{L(X_{0}) - \alpha_{1}(s^{\ast})}{6 \lip{J}\sigma_{\max}\lip{Z}\delta} \Biggr\rceil.\\
%\end{multline}

\begin{rem}
Since the number of invocations of $\eta$ is not greater than $T^*$ till the $\ball_{s^{\ast}}$ is reached, the critic $\hat J^\circ$ is effectively a multi-step Lyapunov function \ie $\hat J^\circ$ is non-increasing and
\[
	\forall k \in \N_0 \spc \hat J^\circ_{k+T^*} - \hat J^\circ_k < 0.
\]
\end{rem}

\begin{rem}
One may ask why not to always use Algorithm \ref{alg:no-lyapunov} if it lifts the requirement on the knowledge of a Lyapunov function.
The answer is that the reaching time of Algorithm \ref{alg:no-lyapunov} is much more conservative than that of realizations of CALF that do use a Lyapunov function like Algorithm \ref{alg:CALF_AC}.
In fact, there are classes of systems for which Lyapunov functions can be effectively constructed.
For instance, the physical energy can be sometimes be used for a Lyapunov function candidate.
Furthermore, there is systematic procedures like backstepping which help extend a Lyapunov function to higher dimensions etc.
Therefore, it is recommended to use a CALF realization like Algorithm \ref{alg:CALF_AC} when a Lyapunov function is known, and a realization like Algorithm \ref{alg:no-lyapunov} otherwise.
\end{rem}

%%%%%%%%%%%%%%%%%%%%%%%%%%%%%%%%%%%%%%%%%%%%%%%%%%%%%%%%%%%%%%%%%%%%%%%%%%%%%%%%%%%%%%%%%%%%
%%%%%%%%%%%%%%%%%%%%%%%%%%%%%%%%%%%%%%%%%%%%%%%%%%%%%%%%%%%%%%%%%%%%%%%%%%%%%%%%%%%%%%%%%%%%
\subsection{Extra simulation study}
\label{sub:extra_sim}

This section provides graphical material on some of the results obtained with Algorithm \ref{alg:no-lyapunov} obtained using \texttt{rcognita}.

\begin{figure*}[h]
\centering
\includegraphics[width=0.9\textwidth]{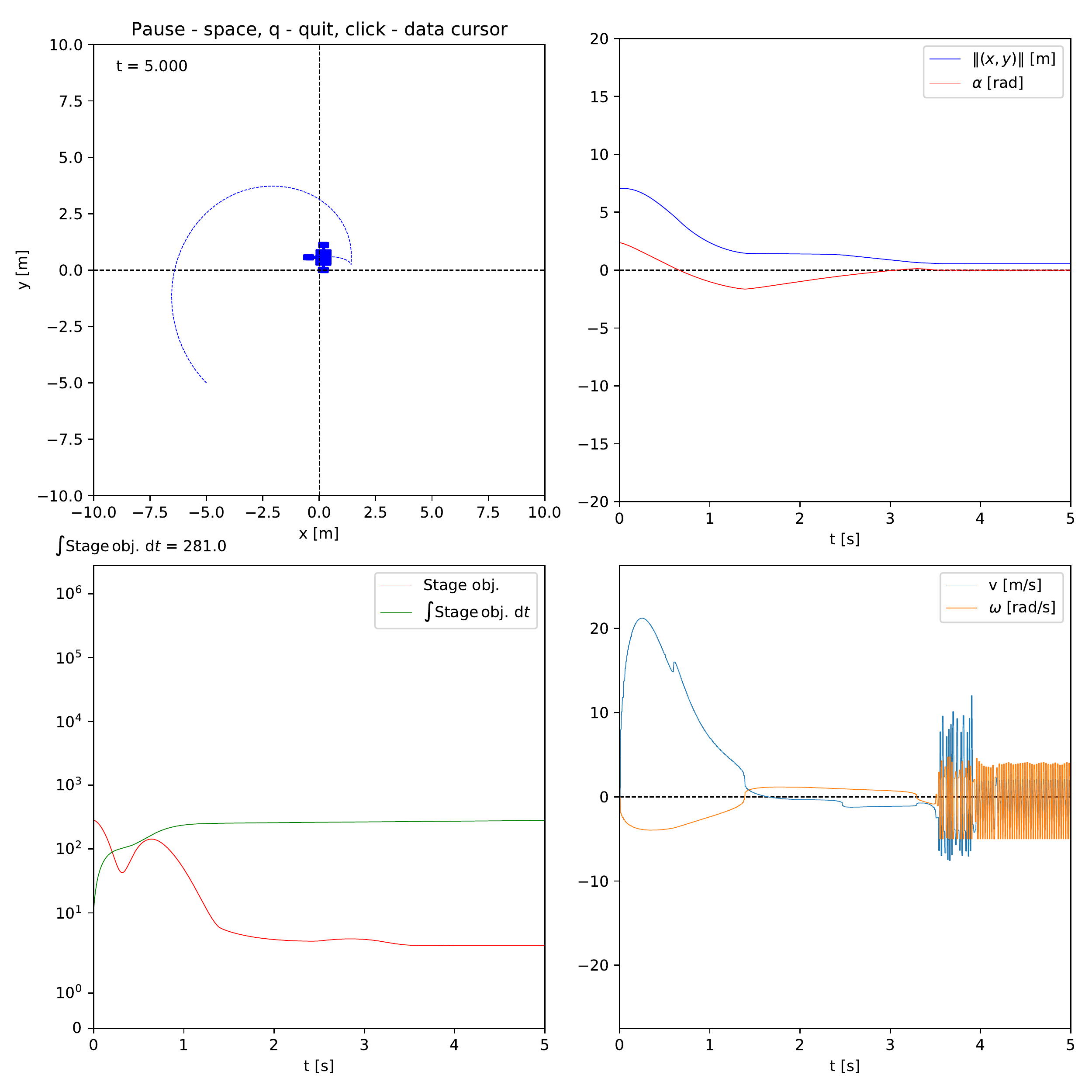}
\caption{Robot parking using the nominal stabilizing policy.
The dither in control near the origin is caused by controller sampling.
The accumulated cost is given just for a reference -- the stabilizing controller is per se not concerned with any optimization.}
\label{fig_nominal}
\end{figure*}

\begin{figure*}[h]
\centering
\includegraphics[width=0.9\textwidth]{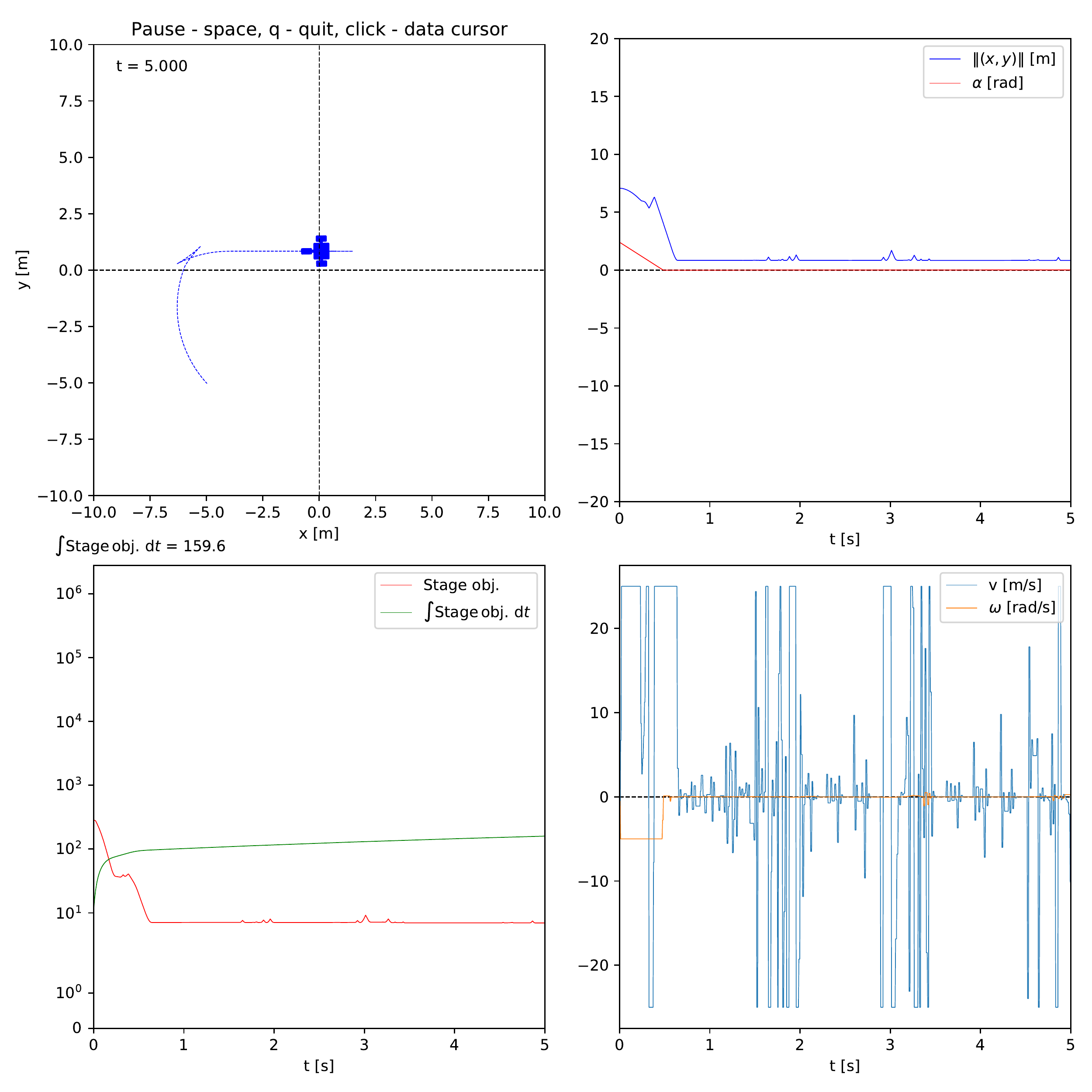}
\caption{Robot parking using the bench-marking reinforcement learning agent without stabilizing constraints.
The parking precision is worse than with the stabilizing controller.
}
%Notice the effect of penalization of the $y^c$-coordinate.}
\label{fig_plainrl}
\end{figure*}

\begin{figure*}[h]
\centering
\includegraphics[width=0.9\textwidth]{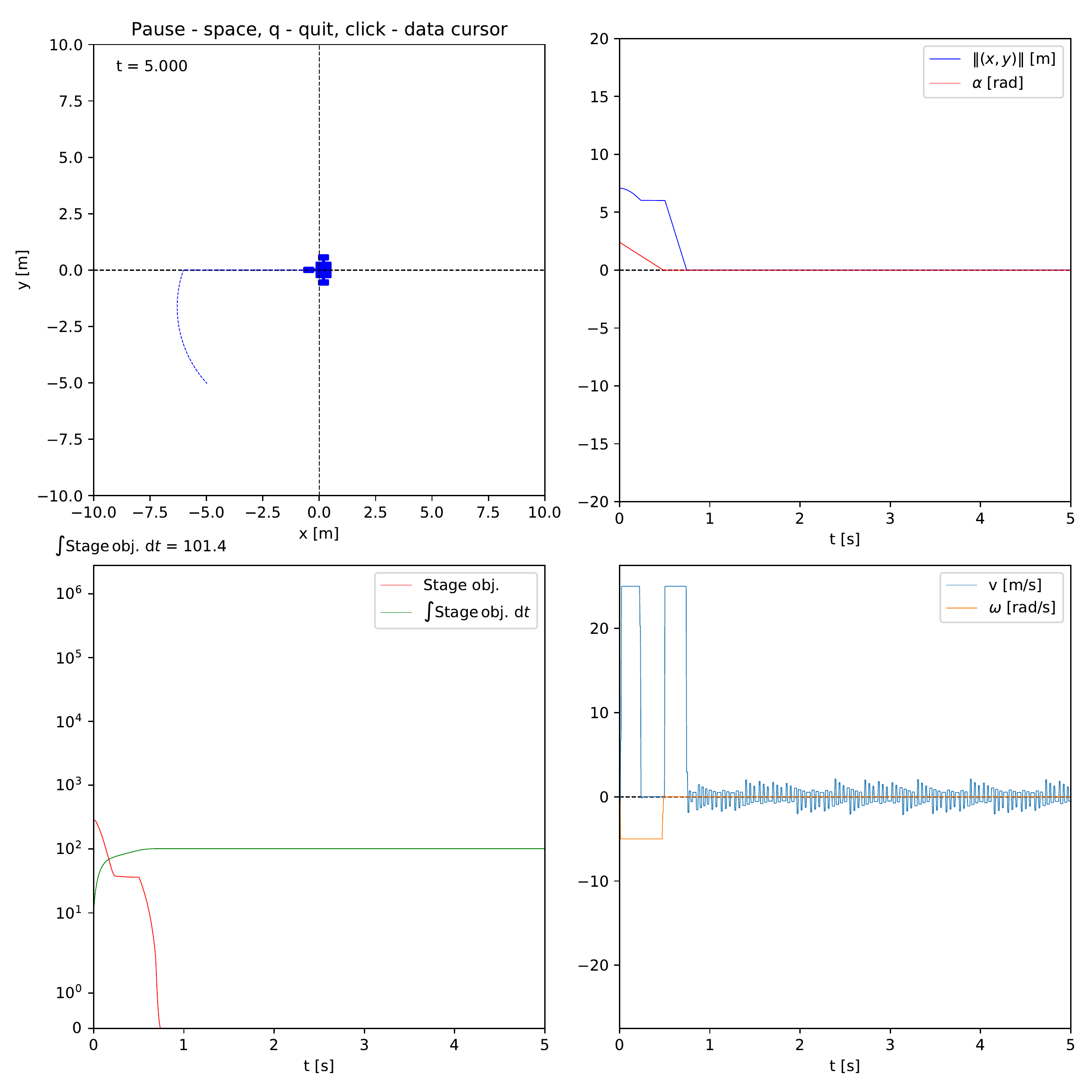}
\caption{Robot parking using CALF as in Algorithm \ref{alg:no-lyapunov}.
The combination of the stabilizing policy with learning allowed for a significantly better stabilization than in both other cases.
The accumulated cost is much lower than with the bench-marking agent.}
\label{fig_stag}
\end{figure*}

%%%%%%%%%%%%%%%%%%%%%%%%%%%%%%%%%%%%%%%%%%%%%%%%%%%%%%%%%%%%%%%%%%%%%%%%%%%%%%%%%%%%%%%%%%%%
%%%%%%%%%%%%%%%%%%%%%%%%%%%%%%%%%%%%%%%%%%%%%%%%%%%%%%%%%%%%%%%%%%%%%%%%%%%%%%%%%%%%%%%%%%%%
\subsection{Bounded noise models}
\label{sub:noise-models}

%\begin{equation}
%	\label{eqn:sys}
%	\pdiff{}{t} X_t = f(X_t, U_t) + \sigma(X_t, U_t)W_t,
%\end{equation}

This section briefly reflects the key models from the survey \cite{Domingo2020}.
Some Latin and Greek letters here are to be interpreted separately from the rest of the text and thus not to be confused.
The easiest way to achieve bounded noise is to apply a saturation function the standard Brownian motion $B_t$.
Such is the case of the sine-Wiener process $Z'_t = \sin \left ( \sqrt{\tfrac {2} {\tau_a}} Z'_t\right)$ with an autocorrelation time parameter $\tau_a$.
%Substituting $Z'_t$ in the system SDE \eqref{eqn:sys} for such a noise would require extra attention to the existence and uniqueness of strong solutions, though.
Another way is to augment the system description with a dynamical noise model.
Thus, the overall description would read, for instance:
\begin{equation}
	\label{eqn:sys-bounded-noise}
%	\tag{sys-aug}
	\begin{aligned}
		& \diff X_t = (f(X_t, U_t) + \sigma(X_t, U_t)Z'_t) \diff t, \\
		& \diff Z'_t = \zeta(Z'_t) \diff t + \chi(Z'_t) \diff B_t,	
	\end{aligned}
\end{equation}
where $\{Z'_t\}$ is the noise process with an internal model described by the drift function $\zeta$ and diffusion function $\Lambda$.
Particular ways to construct the respective stochastic differential equations include the following \cite{Domingo2020}:
\begin{itemize}
%	\item The Sine-Wiener noise realization:	
%		\begin{equation}
%			\label{eqn:SW}
%			\tag{SW}
%			Z'_t = \sin \left ( \sqrt{\tfrac 2 \tau} Z'_t\right), \spc \tau > 0
%		\end{equation} 
%		where $\tau$ is the characteristic autocorrelation time of the process.		
	\item The Doering-Cai-Lin (DCL) noise \\
%	Next equation provides a unique strong solution bounded in $[-1, 1]$.
	\begin{equation}
	\label{eqn:DCL}
%		\tag{DCL}
		\diff Z'_t = - \tfrac 1 w Z'_t \diff t + \sqrt{\tfrac {1 - {Z'_t}^2 }{b_1 (b_2 + 1 )}} \diff B_t,
	\end{equation}
	with parameters $b_2>-1, b_1>0$;
	\item The Tsallis-Stariolo-Borland (TSB) noise \\
%	The diffusion of the process is constant, while the drift tends to infinity in this case while boundaries $[-1,1]$ are approached:
	\begin{equation}
		\label{eqn:TSB}
%		\tag{TSB}
		\diff Z'_t = - \tfrac 1 w \tfrac{Z'_t}{1 - {Z'_t}^2} \diff t + \sqrt{\tfrac{1 - b_2}{b_1}} \diff B_t,
	\end{equation}
	with $b_1 > 0, \spc b_2 < 1$ parameters;
	\item Kessler–S{\o}rensen (KS) noise \\
%	Let $Z'_t$ be the \eqref{eqn:DCL} noise where $\delta \leq 0$ \\
	\begin{equation}
		\label{eqn:KS}
%		\tag{KS}
		\diff Z'_t = - \tfrac {b_3} {\pi b_1} \tan \left ( {\tfrac \pi 2 Z'_t} \right ) \diff t + \tfrac {2} {\pi \sqrt{b_1 (b_2 + 1 )}}  \diff B_t,
	\end{equation}   
	with $b_1 > 0, b_2 \ge 0, b_3 = \tfrac{2b_2+1}{b_2+1}$ parameters.	
\end{itemize}
The above models essentially design the drift and/or diffusion functions so as to confine the strong solutions to stay within $(-1,1)$ (component-wise) almost surely (the unitary bound is chosen for simplicity and may be adjusted according to the application).
It should be noted that the corresponding functions $\zeta, \Lambda$ do not satisfy Lipschitz conditions in the standard way.
Nevertheless, existence and uniqueness of strong solutions can be ensured \cite{Domingo2020}.
So, for instance, in the case of the TSB noise, the drift is at least locally Lipschitz.
This fact, together with non-reachability of the boundaries $-1,1$ (which can be shown) furnishes the strong solution existence and uniqueness.

\end{document}